\newtheorem{thm}{Theorem}[section]
\newtheorem{lemma}{Lemma}[section]
\def\d{\downarrow}
\def\u{\uparrow}
\begin{document}
\title{New Crosstalk Avoidance Codes\\ Based on a Novel Pattern Classification}

\author{Feng~Shi,~\IEEEmembership{Student Member,~IEEE,} Xuebin~Wu, and Zhiyuan~Yan~\IEEEmembership{Senior Member,~IEEE}

\thanks{F. Shi and Z. Yan are with the Department of Electrical and Computer Engineering, Lehigh University, Bethlehem, PA 18015 (e-mails: \{fes209, yan\}@lehigh.edu). X. Wu is with LSI Corporation in Milpitas, CA, USA (e-mail: xuebin.wu@lsi.com).}}

\maketitle
\begin{abstract}
The crosstalk delay associated with global on-chip interconnects becomes more severe in deep submicron technology, and hence can greatly affect the overall system performance. Based on a delay model proposed by Sotiriadis \emph{et al.}, transition patterns over a bus can be classified according to their delays. Using this classification, crosstalk avoidance codes (CACs) have been proposed to alleviate the crosstalk delays by restricting the transition patterns on a bus.
In this paper, we first propose a new classification of transition patterns, and then devise a new family of CACs based on this classification. In comparison to the previous classification, our classification has more classes and the delays of its classes do not overlap, both leading to more accurate control of delays. Our new family of CACs includes some previously proposed codes as well as new codes with reduced delays and improved throughput. Thus, this new family of crosstalk avoidance codes provides a wider variety of tradeoffs between bus delay and efficiency.
Finally, since our analytical approach to the classification and CACs treats the technology-dependent parameters as variables, our approach can be easily adapted to a wide variety of technology.
\end{abstract}
\begin{IEEEkeywords}
Crosstalk avoidance codes, delay, interconnects
\end{IEEEkeywords}
\section{INTRODUCTION}
\label{sec:intro}
\IEEEPARstart{R}{ecent} International Technology Roadmap of Semiconductors (ITRS)~\cite{ITRS} has shown a troubling trend: while gate delay \textbf{decreases} with scaling, global wire delay \textbf{increases}. This is because with the process technologies scaling down into deep submicrometer (DSM), the crosstalk delay becomes dominant in global wire delay due to the increasing coupling capacitance between adjacent wires. Hence, the crosstalk delay has become a serious bottleneck of the overall system performance.

The analytical model proposed by Sotiriadis \emph{et al.}~\cite{Sot01,Sot02}, a widely used delay model, gives upper bounds on the delay of all wires on a bus. According to \cite{Sot01,Sot02},
the delay of the $k$-th wire ($k \in \left\{1,2,\cdots,m \right\}$) of an $m$-bit bus is given by
\begin{equation}
  T_k = \left\{
    \begin{array}{ll}
      \tau_0[(1+\lambda)\Delta_1^2-\lambda\Delta_1\Delta_2], & k=1\\
      \tau_0[(1+2\lambda)\Delta_k^2-\lambda\Delta_k(\Delta_{k-1}+\Delta_{k+1})],
      & k\neq 1,m\\
      \tau_0[(1+\lambda)\Delta_m^2-\lambda\Delta_m\Delta_{m-1}],& k=m,\\
    \end{array}
  \right.
  \label{Eq:1}
\end{equation}
where $\lambda$ is the ratio of the coupling capacitance between
adjacent wires and the ground capacitance,
$\tau_0$ is the propagation delay of a wire free of crosstalk,
and $\Delta_k$ is 1 for 0 $\rightarrow$ 1 transition, -1 for
1 $\rightarrow$ 0 transition, or 0 for no transition on the $k$-th
wire. In this model, the delay of the $k$-th wire
depends on the transition patterns of at most three wires, $k-1$, $k$, and $k+1$ only.
The transition patterns over these three wires can be classified based on Eq.~(\ref{Eq:1}) into five classes, denoted by $Di$ for $i=0,1,2,3,4$, and the patterns in $Di$ have a worst-case delay  $(1+i\lambda)\tau_0$. This classification enables one to limit the worst-case delay over a bus by restricting the patterns transmitted on the bus. That is, by avoiding all transition patterns in $Di$ for $i > i_0$, one can achieve a worst-case delay of $(1+i_0\lambda)\tau_0$ over the bus. Based on this principle, crosstalk avoidance codes (CACs) of different worst-case delays have been proposed (see, for example, \cite{Dua01,Dua04,Vic01}). For example, forbidden overlap codes (FOCs), forbidden transition codes (FTCs),  forbidden pattern codes (FPCs), and one lambda codes (OLCs) achieve a worst-case delay of $(1+3\lambda)\tau_0$, $(1+2\lambda)\tau_0$, $(1+2\lambda)\tau_0$, and $(1+\lambda)\tau_0$, respectively. Based on Eq.~(\ref{Eq:1}), a worst-case delay of
$\tau_0$ can be achieved  by assigning two protection wires to each data wire~\cite{Dua04}.
Other types of CACs, such as those with equalization \cite{Sri08} or two-dimensional CACs \cite{Wu08}, have been proposed in the literature.
For CACs, since the area and power consumption of their encoder/decoder (CODECs) are all overheads, the complexities of the CODECs are important to the effectiveness of CACs. Thus, efficient CODECs have been proposed for CACs \cite{Dua08,Dua09,WY_TVLSI11}.

The classification of transition patterns based on the model in~\cite{Sot01,Sot02} has two drawbacks.
First, the model in~\cite{Sot01,Sot02} has limited accuracy because of its dependence on only three wires: the model overestimates the delays of patterns in $D1$ through $D4$, while it underestimates the delays of patterns in $D0$. For this reason, the scheme with a worst-case delay of
$\tau_0$ in \cite{Dua04} is invalid since its actual delay is much greater.
Second, the actual delay ranges in some classes overlap with others.
This, plus the overestimation of delays for $D1$ through $D4$, implies that the delays of existing CACs are not tightly controlled.
These drawbacks motivate us to include more wires and to classify the transition patterns without overlapping delay ranges.

In \cite{SWY_sips11_model}, we have proposed a new analytical five-wire delay model.  Two extra neighboring wires are included in the delay model \cite{SWY_sips11_model}, and the delay of the middle wire of five neighboring wires is determined by the transition patterns on all five wires. This five-wire model has better accuracy than the model in~\cite{Sot01,Sot02} for $Di$ for $i=0,1,2,3,4$ \cite{SWY_sips11_model}. This work confirms that using more wires leads to improved accuracy.

There are two main contributions in this paper: \begin{itemize}
\item  First, we approximate the crosstalk delay in a five-wire model and propose a new classification of transition patterns.
\item  Second, we propose a family of CACs based on our classification. \end{itemize}

The work in this paper is different from previous works, including our previous works, in several aspects:
\begin{itemize}
\item First, although the delay approximation in this paper is also based on a five-wire model, it is different from that in our previous work \cite{SWY_sips11_model}. The delay approximation in this paper is carried out by extending the approach in \cite{Sak93} from a three-wire model to a five-wire one.
\item Second, our classification of transition patters is different from that in~\cite{Sot01,Sot02} (based on Eq.~(\ref{Eq:1})), in two aspects. First, our classification has seven classes as opposed to five based on Eq.~(\ref{Eq:1}). Second, while the delays of some classes overlap for the classification based on Eq.~(\ref{Eq:1}), all classes in our classification have non-overlapping delays. These two key differences allow us to have a more accurate control of delays for transition patterns.
\item Our new family of CACs is also different from previously proposed CACs, all of which are based on the classification in~\cite{Sot01,Sot02} (based on Eq.~(\ref{Eq:1})). While some codes in this new family are shown to be the same as existing CACs, OLCs, FPCs, and FOCs,  this family also includes new codes that achieve smaller worst-case delays and improved throughputs than OLCs, which have the smallest worst-case delays among all existing CACs.
\end{itemize}

The rest of the paper is organized as follows. In Section~\ref{sec:classification}, we first propose our classification and compare it with that in~\cite{Sot01,Sot02}. We then present our new family of CACs in Section~\ref{sec:cacs} and compare their performance with existing CACs in Section~\ref{sec:performance}. Some concluding remarks are provided in Section~\ref{sec:conclusion}.

\section{INTERCONNECT DELAYS AND CLASSIFICATION}\label{sec:classification}
\subsection{Interconnect Modeling}
Since the functionality and performance in DSM technology are greatly affected by the parasitics, distributed RC models are widely employed to analyze on-chip interconnects. In this paper, we consider the distributed RC model of five wires shown in Fig.~\ref{fig:5dist}, where $V_i(x,t)$ denotes the transient signal at time $t$ and position $x$ ($0 \le x \le L$) over wire $i$ for $i \in \{1,2,3,4,5\}$, $r$ and $c$ denote the resistance and ground capacitance per unit length, respectively. Also, $\lambda c$ denotes the coupling capacitance per unit length between two adjacent wires. The value of $\lambda$ depends on many factors, such as the metal layer in which we route the bus, the wire width, the spacing between adjacent wires, and the distance to the ground layer. We consider a uniformly distributed bus with the same parameters $r$, $c$, and $\lambda$ for all the wires.

\begin{figure}[!tb]
\begin{minipage}[b]{1.0\linewidth}
  \centering
 \centerline{\epsfig{figure=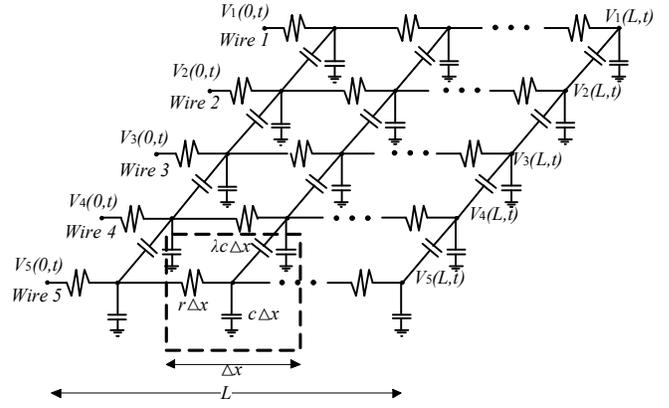,width=8.5cm}}
\end{minipage}
\caption{A distributed RC model for five wires.}
\label{fig:5dist}\vspace{-0.1in}
\end{figure}

\subsection{Derivation of Closed-form Expressions}\label{sec:dev}
When determining the delay of a wire, the model in~\cite{Sot01,Sot02} considers only the effects of either \textbf{one} or \textbf{two} neighboring wires (cf.~Eq.~(\ref{Eq:1})).
To address the drawbacks of the model in~\cite{Sot01,Sot02} described above, additional neighboring wires need to be accounted for.
In our delay derivation below,  whenever possible we consider \textbf{four} neighboring wires of a wire, two neighboring wires on each side, to determine its delay.
To approximate the delay of a side wire (wires $1$, $2$, $n-1$ or $n$) of an $n$-wire bus, \textbf{three} neighboring wires are considered. This is because the side wires are affected by fewer neighboring wires. This scheme is similar to the model in~\cite{Sot01,Sot02} and appears to work well. We focus on the 50\% delay, which is defined as the time required for the unit step response to reach 50\% of its final value.

In~\cite{Sak93}, the crosstalk of two coupled lines was described by partial differential equations (PDEs), and a technique for decoupling these highly coupled PDEs was introduced by using eigenvalues and corresponding eigenvectors. In our work, we extend this approach from a three-wire model to a five-wire one.
Specifically, we first use the  technique  in~\cite{Sak93} to decouple the PDEs that describe the crosstalk of four coupled wires, then solve these independent PDEs for closed-form expressions, and finally approximate the delays of each wire.

The PDEs characterizing five wires with length $L$ are given by:
\begin{equation}
\frac{\partial^2}{\partial x^2}\mathbf{V}(x,t)
=\mathbf{RC}\frac{\partial}{\partial t}\mathbf{V}(x,t),
\label{Eq:2}
\end{equation}
where $\mathbf{R}=\mbox{diag}\{r \; r \; r \; r \; r\}$,
$\mathbf{V}(x,t)=[V_1(x,t)\;V_2(x,t)\;V_3(x,t)\;V_4(x,t)\;V_5(x,t)]^T$, and
\[\mathbf{C}=c\left[ \begin{smallmatrix} 1+\lambda & -\lambda & 0 & 0 & 0 \\ -\lambda & 1+2\lambda & -\lambda & 0 & 0\\ 0 & -\lambda & 1+2\lambda & -\lambda & 0 \\ 0 & 0 & -\lambda & 1+2\lambda & -\lambda \\ 0 & 0 & 0 & -\lambda & 1+\lambda
 \end{smallmatrix}\right].\]

The eigenvalues of $\mathbf{C}/c$ are given by $p_1=1$, $p_2=1+\frac{5+\sqrt{5}}{2}\lambda$, $p_3=1+\frac{5-\sqrt{5}}{2}\lambda$, $p_4=1+\frac{3+\sqrt{5}}{2}\lambda$, and $p_5=1+\frac{3-\sqrt{5}}{2}\lambda$. Their corresponding eigenvectors $\mathbf{e}_i$'s are given by $\mathbf{e}_1=[1\; 1\; 1\;1\;1]^T$, $\mathbf{e}_2=[\frac{\sqrt{5}-1}{4}\; -\frac{1+\sqrt{5}}{4}\; 1\; -\frac{1+\sqrt{5}}{4}\; \frac{\sqrt{5}-1}{4}]^T$, $\mathbf{e}_3=[\frac{-\sqrt{5}+1}{4}\; \frac{\sqrt{5}-1}{4}\; 1\; \frac{\sqrt{5}-1}{4}\; -\frac{\sqrt{5}+1}{4}]^T$, $\mathbf{e}_4=[-1\;\frac{\sqrt{5}+1}{2}\; 0\; -\frac{\sqrt{5}+1}{2}\; 1]^T$, and $\mathbf{e}_5=[-1\; -\frac{\sqrt{5}-1}{2}\; 0\; \frac{\sqrt{5}-1}{2}\; 1]^T$, respectively.

With a technique for decoupling partial differential equations similar to \cite{Sak93}, Eq.~(\ref{Eq:2}) is transformed into
\begin{equation}
\frac{\partial^2}{\partial x^2}U_i(x,t)=rcp_i\frac{\partial}{\partial t} U_i(x,t), \mbox{ for }i=1,2,3,4,5,\label{Eq:3}
\end{equation}
where $U_i(x,t)=\mathbf{V}^T(x,t) \mathbf{e}_i$ denotes the transformed signals. The decoupled PDEs in Eq.~(\ref{Eq:3}) are independent of each other. Each $U_i(x,t)$ describes a single wire with a modified capacitance $cp_i$. The solution to $U_i(L,t)$ is given by a series of the form $U_i(L,t) = V_{dd} + \sum_{k=0}^\infty r_k e^{-\frac{t}{s_k \tau}}$.
As shown in~\cite{Sak93}, a single-exponent approximation $V_{dd}(1+r_0 e^{-\frac{t}{s_0 \tau}})$ is enough for $t/\tau > 0.1$, where $r_0$ and $s_0$ are the coefficients of the most significant term.

For different transitions, we solve Eq.~(\ref{Eq:3}) for $U_i(x,t)$ and obtain $V_3(L,t)=\frac{1}{5}[U_1(L,t)+2U_2(L,t)+2U_3(L,t)]$, which is given by a sum of a constant and three exponent terms, $V_{dd}(1-c_0 e^{-\frac{t}{a_0\tau}} - c_1 e^{-\frac{t}{a_1\tau}} - c_2 e^{-\frac{t}{a_2\tau}})$. Then the 50\% delay of wire 3 can be evaluated by solving $V_3(L,t)=0.5V_{dd}$.

\begin{table*}[!htb]
\caption{Closed-form expressions for the output signals on wire 3 in a five-wire bus with evaluated and simulated 50\% delays ($\tau_0=1.42$~$\mathrm{ps}$,~$\tau=\frac{8}{\pi^2}\tau_0$, $\lambda=12.24$, $a_0=1$, $a_1=1+\frac{5-\sqrt{5}}{2}\lambda$, and $a_2=1+\frac{5+\sqrt{5}}{2}\lambda$ for all classes).}\label{tab:1}
\begin{center}
\begin{tabular}{|c|l|c|c|c|c|c|}
\hline
\multirow{3}{*}{Class $i$} & \multirow{3}{*}{Patterns} & \multicolumn{3}{|c|}{Closed-form expression for output signal on wire 3} & \multirow{3}{*}{Evaluated delays (ps)} & \multirow{3}{*}{Sim. delay (ps)}\\
\cline{3-5} & & \multicolumn{3}{|c|}{$V_{dd}(1-c_0 e^{-\frac{t}{a_0\tau}} - c_1 e^{-\frac{t}{a_1\tau}} - c_2 e^{-\frac{t}{a_2\tau}})$} & &\\
\cline{3-5} & & $c_0$ & $c_1$ &  $c_2$ & &\\
\hline
\hline
\multirow{3}{*}{0} & {$\u \u \u \u \u$} & $\frac{4}{\pi}$ & 0 & 0 & 1.08 & 1.18\\
\cline{2-7}
& {-$\u \u \u \u$}, $\u \u \u \u$- & $\frac{16}{5\pi}$ & $\frac{2(1+\sqrt{5})}{5\pi}$ & $\frac{2(1-\sqrt{5})}{5\pi}$ & \textbf{1.41} & \textbf{1.50}\\
\cline{2-7}
& {$\u $-$\u \u \u$}, $\u \u \u$-$\u$ & $\frac{16}{5\pi}$ & $\frac{2(1-\sqrt{5})}{5\pi}$ & $\frac{2(1+\sqrt{5})}{5\pi}$ & \textbf{1.41} & \textbf{1.50}\\
\hline
\hline
\multirow{3}{*}{1}& {-$\u \u \u $-}, $\d\u \u \u \u$, $\u \u \u \u\d$ & $\frac{12}{5\pi}$ & $\frac{4(1+\sqrt{5})}{5\pi}$ & $\frac{4(1-\sqrt{5})}{5\pi}$ & \textbf{2.35} & 2.40\\
\cline{2-7}
& {- -$\u \u \u$}, $\u \u \u$- -, -$\u \u$-$\u$, $\u$-$ \u \u$- & $\frac{12}{5\pi}$ & $\frac{4}{5\pi}$ & $\frac{4}{5\pi}$ & \textbf{2.35} & 2.40\\
\cline{2-7}
&{$\u$-$\u$-$\u$}, $\u\u\u\d\u$, $\u\d\u\u\u$ & $\frac{12}{5\pi}$ & $\frac{4(1-\sqrt{5})}{5\pi}$ & $\frac{4(1+\sqrt{5})}{5\pi}$ &\textbf{2.35} & \textbf{2.45}\\
\hline
\hline
\multirow{3}{*}{2}& {-$\u \u \u \d$}, $\d \u \u \u $- & $\frac{8}{5\pi}$ & $\frac{6(1+\sqrt{5})}{5\pi}$ & $\frac{6(1-\sqrt{5})}{5\pi}$ & 6.17 & 6.84\\
\cline{2-7}
& {- -$\u \u$-}, -$\u \u$- -, $\d$-$\u \u\u$, $\d\u \u$-$\u$, & \multirow{2}{*}{$\frac{8}{5\pi}$} & \multirow{2}{*}{$\frac{2(3+\sqrt{5})}{5\pi}$} & \multirow{2}{*}{$\frac{2(3-\sqrt{5})}{5\pi}$} & \multirow{2}{*}{9.62} & \multirow{2}{*}{9.21}\\
&$\u$-$\u \u\d$, $\u\u \u$-$\d$ & & & & & \\
\cline{2-7}
& {$\d \u \u \u \d$} & $\frac{4}{5\pi}$ & $\frac{8(1+\sqrt{5})}{5\pi}$ & $\frac{8(1-\sqrt{5})}{5\pi}$ & \textbf{9.90} & \textbf{10.70} \\
\hline
\hline
\multirow{4}{*}{3}& {- -$\u \u \d$}, $\d \u \u$- -, -$\u \u$-$\d$, $\d$-$ \u \u$-  & $\frac{4}{5\pi}$ & $\frac{4(2+\sqrt{5})}{5\pi}$ & $\frac{4(2-\sqrt{5})}{5\pi}$ & 14.07 & 14.22\\
\cline{2-7}
& {$\d$-$\u \u \d$}, $\d \u \u$-$\d$ & 0 & $\frac{2(5+3\sqrt{5})}{5\pi}$ & $\frac{2(5-3\sqrt{5})}{5\pi}$ & 16.91 & 17.18\\
\cline{2-7}
& {- -$\u$-$\u$}, $\u$-$\u$- -, -$\u\u\d\u$, $\u\u\u\d$-, & \multirow{2}{*}{$\frac{8}{5\pi}$} & \multirow{2}{*}{$\frac{2(3-\sqrt{5})}{5\pi}$} & \multirow{2}{*}{$\frac{2(3+\sqrt{5})}{5\pi}$} & \multirow{2}{*}{\textbf{19.24}} & \multirow{2}{*}{\textbf{18.47}}\\
& -$\d\u\u\u$, $\u\d\u\u$- & & & & &\\
\hline
\hline
\multirow{5}{*}{4}& {- -$\u$- -}, $\u$-$\u$-$\d$, $\d$-$\u$-$\u$, & \multirow{3}{*}{$\frac{4}{5\pi}$} & \multirow{3}{*}{$\frac{8}{5\pi}$} & \multirow{3}{*}{$\frac{8}{5\pi}$} & \multirow{3}{*}{22.67} & \multirow{3}{*}{22.60}\\
& -$\u\u\d$-, $\u\u\u\d\d$, $\d\u\u\d\u$, & & & & &\\
& -$\d\u\u$-, $\u\d\u\u\d$, $\d\d\u\u\u$, & & & & &\\
\cline{2-7}
& {- -$\u$-$\d$}, $\d$-$\u$- -, -$\u\u\d\d$, $\d\u\u\d$-, & \multirow{2}{*}{0} & \multirow{2}{*}{$\frac{2(5+\sqrt{5})}{5\pi}$} & \multirow{2}{*}{$\frac{2(5-\sqrt{5})}{5\pi}$} & \multirow{2}{*}{24.58} & \multirow{2}{*}{24.68}\\
& -$\d\u\u\d$, $\d\d\u\u$-& & & & & \\
\cline{2-7}
& {$\d$-$ \u$-$\d$}, $\d\u \u\d\d$, $\d\d \u\u\d$ & $-\frac{4}{5\pi}$ & $\frac{4(3+\sqrt{5})}{5\pi}$ & $\frac{4(3-\sqrt{5})}{5\pi}$ & \textbf{25.84} & \textbf{26.03}\\
\hline
\hline
\multirow{6}{*}{5}& {$\d \d \u$-$\d$}, $\d $-$ \u\d\d$ & $-\frac{8}{5\pi}$ & $\frac{2(7+\sqrt{5})}{5\pi}$ & $\frac{2(7-\sqrt{5})}{5\pi}$ & 36.63 & 36.91\\
\cline{2-7}
& {- -$\u \d \d$}, $\d \d \u$- -, -$\d \u$-$\d$, $\d$-$ \u \d$- & $-\frac{4}{5\pi}$ & $\frac{12}{5\pi}$ & $\frac{12}{5\pi}$ & 37.24 & 37.52\\
\cline{2-7}
& {- -$\u \d$-}, -$\d \u$- -, $\u$-$\u \d\d$, $\u\d \u$-$\d$, & \multirow{2}{*}{0} & \multirow{2}{*}{$\frac{2(5-\sqrt{5})}{5\pi}$} & \multirow{2}{*}{$\frac{2(5+\sqrt{5})}{5\pi}$} & \multirow{2}{*}{38.07} & \multirow{2}{*}{38.35}\\
& $\d$-$\u \d\u$, $\d \d\u$-$\u$, &&&&&\\
\cline{2-7}
& {- -$\u \d \u$}, $\u \d \u$- -, -$\d \u$-$\u$, $\u$-$ \u \d$-  & $\frac{4}{5\pi}$ & $\frac{4(2-\sqrt{5})}{5\pi}$ & $\frac{4(2+\sqrt{5})}{5\pi}$ & 39.22 & 39.47\\
\cline{2-7}
& {$\u$-$\u \d\u$}, $\u \d \u$-$\u$ & $\frac{8}{5\pi}$ & $\frac{6(1-\sqrt{5})}{5\pi}$ & $\frac{6(1+\sqrt{5})}{5\pi}$ & \textbf{40.87} & \textbf{41.11}\\
\hline
\hline
\multirow{5}{*}{6}& {$\d \d \u \d \d$} & $-\frac{12}{5\pi}$ & $\frac{16}{5\pi}$ & $\frac{16}{5\pi}$ & 48.43 & 48.85\\
\cline{2-7}
& {$\d \d \u \d$-}, -$\d \u \d\d$ & $-\frac{8}{5\pi}$ & $\frac{2(7-\sqrt{5})}{5\pi}$ & $\frac{2(7+\sqrt{5})}{5\pi}$ & 50.43 & 50.86\\
\cline{2-7}
& {-$\d \u \d$-}, $\u\d \u \d\d$, $\d\d \u \d\u$ & $-\frac{4}{5\pi}$ & $\frac{4(3-\sqrt{5})}{5\pi}$ & $\frac{4(3+\sqrt{5})}{5\pi}$ & 52.78 & 53.25\\
\cline{2-7}
\cline{2-7}
& {$\u \d \u\d$-}, -$\d \u \d \u$ & 0 & $\frac{4(5-3\sqrt{5})}{5\pi}$ & $\frac{4(5+3\sqrt{5})}{5\pi}$ & 55.48 & 55.97\\
\cline{2-7}
& {$\u \d \u \d\u$} & $\frac{4}{5\pi}$ & $\frac{8(1-\sqrt{5})}{5\pi}$ & $\frac{8(1+\sqrt{5})}{5\pi}$ & \textbf{58.52} & \textbf{59.04}\\
\hline
\end{tabular}
\end{center}\vspace{-0.1in}
\end{table*}

For side wires, PDEs characterizing four wires with length $L$ are given by:

\begin{equation}
\frac{\partial^2}{\partial x^2}\mathbf{V}(x,t)
=\mathbf{RC}\frac{\partial}{\partial t}\mathbf{V}(x,t),
\label{Eq:PDE4}
\end{equation}
where $\mathbf{R}=\mbox{diag}\{r \; r \; r \; r\}$,
$\mathbf{V}(x,t)=[V_1(x,t)\;V_2(x,t)\;V_3(x,t)\;V_4(x,t)]^T$, and
$\mathbf{C}=c\left[ \begin{smallmatrix} 1+\lambda & -\lambda & 0 & 0 \\ -\lambda & 1+2\lambda & -\lambda & 0\\ 0 & -\lambda & 1+2\lambda & -\lambda \\ 0 & 0 & -\lambda & 1+\lambda \end{smallmatrix}\right]$.

The eigenvalues of $\mathbf{C}/c$ are given by $p_1=1$, $p_2=1+(2-\sqrt{2})\lambda$, $p_3=1+2\lambda$, and $p_4=1+(2+\sqrt{2})\lambda$. Their corresponding eigenvectors $\mathbf{e}_i$'s are given by $\mathbf{e}_1=[1\; 1\; 1\;1]^T$, $\mathbf{e}_2=[-1 \; (1-\sqrt{2}) \; -(1-\sqrt{2}) \; 1]^T$, $\mathbf{e}_3=[1 \; -1 \; -1 \; 1]^T$, and $\mathbf{e}_4=[-1 \; (1+\sqrt{2}) \; -(1+\sqrt{2}) \; 1]^T$, respectively.

By decoupling the PDEs in Eq.~(\ref{Eq:PDE4}), we have
\begin{equation}
\frac{\partial^2}{\partial x^2}U_i(x,t)=rcp_i\frac{\partial}{\partial t} U_i(x,t), \mbox{ for }i=1,2,3,4,\label{Eq:dePDE4}
\end{equation}

The expressions of wires 1 and 2 are given by $V_1(L,t)=\frac{1}{4}U_1(L,t) - \frac{2+\sqrt{2}}{8}U_2(L,t) + \frac{1}{4} U_3(L,t) -\frac{2-\sqrt{2}}{8}U_4(L,t)$ and $V_2(L,t)=\frac{1}{4}U_1(L,t) - \frac{\sqrt{2}}{8}U_2(L,t) - \frac{1}{4} U_3(L,t) +\frac{\sqrt{2}}{8}U_4(L,t)$, respectively.
Then the 50\% delays of wires 1 and 2 can be evaluated by solving $V_i(L,t)=0.5V_{dd}$ for $i=1,2$.

\subsection{Pattern Classification}
\label{sec:pat}
First, we consider the classification of transition patterns over five wires with respect to the delay of the middle wire (wire 3).
In this paper, we use ``$\uparrow$" to denote a transition from 0 to the supply voltage $V_{dd}$ (normalized to 1), ``-" no transition, and ``$\downarrow$" a transition from $V_{dd}$ to 0.
We first focus on patterns with a $\u$ transition on wire 3 in a five-wire bus and derive $V_3(L,t)$ for each pattern as described in Sec.~\ref{sec:dev}. There are $3^4=81$ different transition patterns, which can be partitioned into 25 subclasses according to the expressions of the output signals on wire 3: All transition patterns in each subclass have the same expression $V_3(L,t)$. The expressions of all 25 subclasses are shown in Tab.~\ref{tab:1}.
Then the expressions $V_3(L,t)$ of all patterns in the 25 subclasses are evaluated for their 50\% delays. By grouping subclasses with close delays into one class, we can divide the 81 transition patterns into seven classes $Ci$ for $i=0,1,\cdots, 6$ shown in Tab.~\ref{tab:1}. For all 25 subclasses, simulated delays are also provided in Tab.~\ref{tab:1}. For all seven classes, the difference between evaluated delay and simulated delay in Tab.~\ref{tab:1} is small.

All evaluations and simulations are based on a freePDK 45nm CMOS technology with 10 metal layers \cite{FreePDK45}. We assume that the top two metal layers, layers 9 and 10, are used for routing global interconnects, and that metal layer 8 is used as the ground layer.
An interconnect model in \cite{PTM} is used for parasitic extraction.
For a 5mm bus in the top metal layer, the key parasitics, resistance, ground capacitance, and coupling capacitance, are given by $R=68.75 \Omega$, $C_{gnd} = 41.32 fF$, and $C_{couple} = 505.68 fF$, respectively. The bus is modeled by a distributed RC model as shown in Fig.~\ref{fig:5dist} with 100 segments. The two important parameters used in our delay approximation are $\tau_0=0.5 R C_{gnd} = 1.42$ps and $\lambda=C_{couple} / C_{gnd} = 12.24$.
Since the crosstalk delay on the bus constitutes a major part of the whole delay, the delays introduced by buffers are ignored. We assume that ideal step signals are applied on the bus directly. The closed-form expressions are evaluated for 50\% delays via MATLAB and the simulation is done by HSPICE.

From Tab.~\ref{tab:1}, it can be easily verified that $C5$ and $C6$ are the same as $D3$ and $D4$ in~\cite{Sot01,Sot02}, respectively. That is, the middle three wires of the transition patterns in $C5$ ($C6$, respectively) constitute $D3$ ($D4$, respectively). The transition patterns in $D0$, $D1$, and $D2$ are divided into five classes $C0$---$C4$ in our classification with following relations, $C4 \subset D2$, $C3 \subset D1\cup D2$, $C2 \subset D0\cup D1$, $C1 \subset D0\cup D1 \cup D2$, and $C0 \subset D0\cup D1$.

Note that the coefficients $c_i$ for $i=0, 1, 2$ of the expression of wire 3 are independent of technology and determined by different patterns. For a given pattern, the coefficients $c_i$ are fixed and the delay is a function of $\tau_0$ and $\lambda$. Since the ratio $t/\tau_0$ appears in the exponent term, varying $\tau_0$ would scale delays in all classes. Thus, the classification does not depend on $\tau_0$. The coupling factor $\lambda$ could affect the delay differently.
In the following, we verify our classification for technology with different coupling factor, $\lambda=1,2,\cdots ,13$, and show the results in Fig.~\ref{fig:delay}.
Different classes are denoted by different line styles. Each class contains multiple lines, which represents a subclass. Patterns in each subclass have the same delay.
For $\lambda\ge 3$, the ranges of delays in all classes do not overlap. Also, the delay in each subclass increases linearly with $\lambda$. This implies that our classification is valid provided that the coupling factor $\lambda$\ is at least 3.

\begin{figure}[!tb]
\begin{minipage}[b]{1.0\linewidth}
  \centering
 \centerline{\epsfig{figure=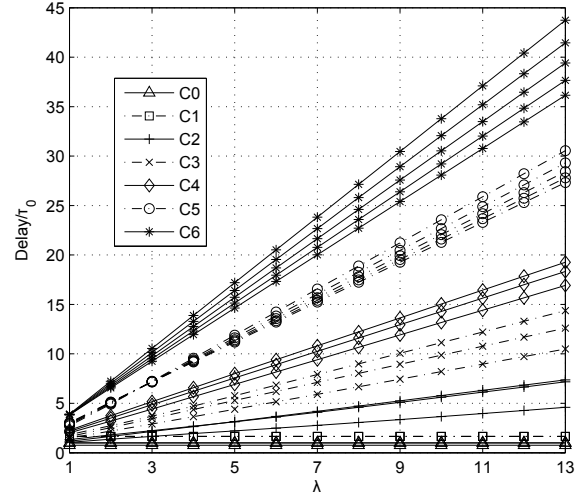,width=8.5cm}}
\end{minipage}
\caption{Delays of the middle wire for all patterns with respect to $\lambda$ in a five-wire bus ($\tau_0=1.42$ps).}
\label{fig:delay}
\end{figure}

\begin{figure}[!tb]
\begin{minipage}[b]{1.0\linewidth}
  \centering
 \centerline{\epsfig{figure=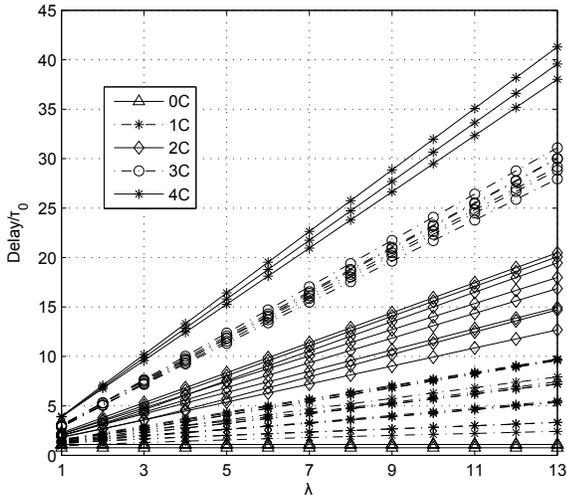,width=8.5cm}}
\end{minipage}
\caption{Delays of side wires for all patterns with respect to $\lambda$ in a four-wire bus ($\tau_0=1.42$ps).}
\label{fig:boundary}
\end{figure}

Then, we consider the classification of transition patterns over four wires with respect to the delays of the side wires. We classify patterns by considering the worst-case delays of wires 1 and 2, respectively. Note that the classification with respect to the delays of wires 4 and 5 would be the same by symmetry.
We first focus on patterns with a $\u$ transition on wire 2 in a four-wire bus. There are $3^3=27$ different transition patterns. As described in Sec.~\ref{sec:dev}, we first derive the expressions $V_2(L,t)$ of these 27 patterns shown in Tab.~\ref{tab:boundary2}. By evaluating these patterns for their 50\% delays, we group patterns with close delays into one class, and form 5 classes $jC$ for $j=0,1,2,3,4$ as shown in Tab.~\ref{tab:boundary2}.
Then, we focus on patterns with a $\u$ transition on wire 1. There are $3^3=27$ different transition patterns. As described in Sec.~\ref{sec:dev}, we first derive the expressions $V_1(L,t)$ of these 27 patterns shown in Tab.~\ref{tab:boundary1}. By evaluating these patterns for their 50\% delays, we group patterns with close delays into one class, and form 3 classes $jC$ for $j=0,1,2$ as shown in Tab.~\ref{tab:boundary1}.
When both wires 1 and 2 have transitions, the delay on wire 2 is larger than that of wire 1, which can be verified from Tabs.~\ref{tab:boundary2} and \ref{tab:boundary1}. In this case, we focus on the delay of wire 2.
When only wire 1 has transition, we focus on the delay of wire 1. The difference between evaluated delay and simulated delay is small as shown in Tabs.~\ref{tab:boundary2} and \ref{tab:boundary1} with one exception (the pattern $\u\u\d\u$ in $1C$ in Tab.~\ref{tab:boundary2}), which doesn't change our classification.

\begin{table*}[!htb]
\caption{Closed-form expressions for the output signals on wire 2 in a four-wire bus with evaluated and simulated 50\% delays ($\tau_0=1.42$~$\mathrm{ps}$,~$\tau=\frac{8}{\pi^2}\tau_0$, $\lambda=12.24$, $a_0=1$, $a_1=1+(2-\sqrt{2})\lambda$, $a_2=1+2\lambda$, and $a_3=1+(2+\sqrt{2})\lambda$ for all classes).}\label{tab:boundary2}
\begin{center}
\begin{tabular}{|c|l|c|c|c|c|c|c|}
\hline
\multirow{3}{*}{$jC$} & \multirow{3}{*}{Patterns} & \multicolumn{4}{|c|}{Closed-form expression for the output signal on wire 2} & \multirow{3}{*}{Evaluated delays (ps)} & \multirow{3}{*}{Sim. delay (ps)}\\
\cline{3-6} & & \multicolumn{4}{|c|}{$V_{dd}(1-c_0 e^{-\frac{t}{a_0\tau}} - c_1 e^{-\frac{t}{a_1\tau}} - c_2 e^{-\frac{t}{a_2\tau}} - c_3 e^{-\frac{t}{a_3\tau}})$} & &\\
\cline{3-6} & & $c_0$ & $c_1$ &  $c_2$ & $c_3$ & &\\
\hline
\hline
\multirow{4}{*}{0} & {$\u \u \u \u $} & $\frac{4}{\pi}$ & 0 & 0 & 0 & 1.08 & 1.18\\
\cline{2-8}
& $\u\u\u$- & $\frac{3}{\pi}$ & $\frac{\sqrt{2}}{2\pi}$ & $\frac{1}{\pi}$ & $-\frac{\sqrt{2}}{2\pi}$ & \textbf{1.55} & 1.61\\
\cline{2-8}
& $\u\u$-$\u$ & $\frac{3}{\pi}$ & $\frac{2-\sqrt{2}}{2\pi}$ & $-\frac{1}{\pi}$ & $-\frac{2+\sqrt{2}}{2\pi}$ & \textbf{1.55} & 1.62\\
\cline{2-8}
& -$\u\u\u$ & $\frac{3}{\pi}$ & $-\frac{\sqrt{2}}{2\pi}$ & $\frac{1}{\pi}$ & $\frac{\sqrt{2}}{2\pi}$ & \textbf{1.55} & \textbf{1.64}\\
\hline
\hline
\multirow{6}{*}{1}
& $\u\u\u\d$ & $\frac{2}{\pi}$ & $\frac{\sqrt{2}}{\pi}$ & $\frac{2}{\pi}$ & $-\frac{\sqrt{2}}{\pi}$ & 3.33 & 3.22\\
\cline{2-8}
& $\u\u$- - & $\frac{2}{\pi}$ & $\frac{1}{\pi}$ & 0 & $\frac{1}{\pi}$ & 4.54 & 3.48\\
\cline{2-8}
& -$\u\u$- & $\frac{2}{\pi}$ & 0 & $\frac{2}{\pi}$ & 0 & 7.21 & 5.15\\
\cline{2-8}
& $\u\u$-$\d$ & $\frac{1}{\pi}$ & $\frac{2+\sqrt{2}}{2\pi}$ & $\frac{1}{\pi}$ & $\frac{2-\sqrt{2}}{2\pi}$ & 9.70 & 9.38\\
\cline{2-8}
& $\u\u\d\u$ & $\frac{2}{\pi}$ & 0 & $\frac{2-\sqrt{2}}{2\pi}$ & $-\frac{2}{\pi}$ & 9.98 & 3.92\\
\cline{2-8}
& -$\u\u\d$ & $\frac{1}{\pi}$ & $\frac{\sqrt{2}}{2\pi}$ & $\frac{3}{\pi}$ & $\frac{-\sqrt{2}}{2\pi}$ & \textbf{12.89} & \textbf{13.03}\\
\hline
\hline
\multirow{8}{*}{2}
& $\u\u\d$- & $\frac{1}{\pi}$ & $\frac{4-\sqrt{2}}{2\pi}$ & $-\frac{1}{\pi}$ & $\frac{4+\sqrt{2}}{2\pi}$ & 17.02 & 16.05\\
\cline{2-8}
& -$\u$-$\u$ & $\frac{2}{\pi}$ & $\frac{1-\sqrt{2}}{\pi}$ & 0 & $\frac{1+\sqrt{2}}{\pi}$ & 19.67 & 18.79\\
\cline{2-8}
& $\u\u\d\d$ & 0 & $\frac{2}{\pi}$ & 0 & $\frac{2}{\pi}$ & 20.05 & 19.85\\
\cline{2-8}
& -$\u$- - & $\frac{1}{\pi}$ & $\frac{2-\sqrt{2}}{2\pi}$ & $\frac{1}{\pi}$ & $\frac{2+\sqrt{2}}{2\pi}$ & 22.59 & 22.48\\
\cline{2-8}
& -$\u$-$\d$ & 0 & $\frac{1}{\pi}$ & $\frac{2}{\pi}$ & $\frac{1}{\pi}$ & 24.12 & 24.22\\
\cline{2-8}
& $\d\u\u\u$ & $\frac{2}{\pi}$ & $-\frac{\sqrt{2}}{\pi}$ & $\frac{2}{\pi}$ & $\frac{\sqrt{2}}{\pi}$ & 26.02 & 26.06\\
\cline{2-8}
& $\d\u\u$- & $\frac{1}{\pi}$ & $-\frac{\sqrt{2}}{2\pi}$ & $\frac{3}{\pi}$ & $\frac{\sqrt{2}}{2\pi}$ & 26.89 & 27.06\\
\cline{2-8}
& $\d\u\u\d$ & 0 & 0 & $\frac{4}{\pi}$ & 0 & \textbf{27.45} & \textbf{27.68}\\
\hline
\hline
\multirow{6}{*}{3}
& -$\u\d\d$ & $-\frac{1}{\pi}$ & $\frac{4-\sqrt{2}}{2\pi}$ & $\frac{1}{\pi}$ & $\frac{4+\sqrt{2}}{2\pi}$ & 37.44 & 37.74\\
\cline{2-8}
& -$\u\d$- & 0 & $\frac{2-\sqrt{2}}{\pi}$ & 0 & $\frac{2+\sqrt{2}}{\pi}$ & 38.61 & 38.89\\
\cline{2-8}
& $\d\u$-$\d$ & $-\frac{1}{\pi}$ & $\frac{2-\sqrt{2}}{2\pi}$ & $\frac{3}{\pi}$ & $\frac{2+\sqrt{2}}{2\pi}$ & 39.06 & 39.40\\
\cline{2-8}
& -$\u\d\u$ & $\frac{1}{\pi}$ & $\frac{4-\sqrt{2}}{2\pi}$ & $-\frac{1}{\pi}$ & $\frac{4+\sqrt{2}}{2\pi}$ & 40.12 & 40.39\\
\cline{2-8}
& $\d\u$- - & 0 & $\frac{1-\sqrt{2}}{\pi}$ & $\frac{2}{\pi}$ & $\frac{1+\sqrt{2}}{\pi}$ & 40.21 & 40.55\\
\cline{2-8}
& $\d\u$-$\u$ & $\frac{1}{\pi}$ & $\frac{2-3\sqrt{2}}{2\pi}$ & $\frac{1}{\pi}$ & $\frac{2+3\sqrt{2}}{2\pi}$ & \textbf{41.63} & \textbf{41.98}\\
\hline
\hline
\multirow{3}{*}{4}
& $\d\u\d\d$ & $-\frac{2}{\pi}$ & $\frac{2-\sqrt{2}}{\pi}$ & $\frac{2}{\pi}$ & $\frac{2+\sqrt{2}}{\pi}$ & 50.92 & 51.36\\
\cline{2-8}
& $\d\u\d$- & $-\frac{1}{\pi}$ & $\frac{4-3\sqrt{2}}{2\pi}$ & $\frac{1}{\pi}$ & $\frac{4+3\sqrt{2}}{2\pi}$ & 52.99 & 53.44\\
\cline{2-8}
& $\d\u\d\u$ & 0 & $\frac{2-2\sqrt{2}}{\pi}$ & 0 & $\frac{2+2\sqrt{2}}{\pi}$ & \textbf{55.28} & \textbf{55.79}\\
\hline
\end{tabular}
\end{center}
\end{table*}

\begin{table*}[!htb]
\caption{Closed-form expressions for the output signals on wire 1 in a four-wire bus with evaluated and simulated 50\% delays ($\tau_0=1.42$~$\mathrm{ps}$,~$\tau=\frac{8}{\pi^2}\tau_0$, $\lambda=12.24$, $a_0=1$, $a_1=1+(2-\sqrt{2})\lambda$, $a_2=1+2\lambda$, and $a_3=1+(2+\sqrt{2})\lambda$ for all classes).}\label{tab:boundary1}
\begin{center}
\begin{tabular}{|c|l|c|c|c|c|c|c|}
\hline
\multirow{3}{*}{$jC$} & \multirow{3}{*}{Patterns} & \multicolumn{4}{|c|}{Closed-form expression for the output signal on wire 1} & \multirow{3}{*}{Evaluated delays (ps)} & \multirow{3}{*}{Sim. delay (ps)}\\
\cline{3-6} & & \multicolumn{4}{|c|}{$V_{dd}(1-c_0 e^{-\frac{t}{a_0\tau}} - c_1 e^{-\frac{t}{a_1\tau}} - c_2 e^{-\frac{t}{a_2\tau}} - c_3 e^{-\frac{t}{a_3\tau}})$} & &\\
\cline{3-6} & & $c_0$ & $c_1$ &  $c_2$ & $c_3$ & &\\
\hline
\hline
\multirow{4}{*}{0}
& $\u\u\u\u$ & $\frac{4}{\pi}$ & 0 & 0 & 0 & 1.08 & 1.18\\
\cline{2-8}
& $\u\u\u$- & $\frac{3}{\pi}$ & $-\frac{2+\sqrt{2}}{2\pi}$ & $-\frac{1}{\pi}$ & $\frac{2-\sqrt{2}}{2\pi}$ & \textbf{1.55} & 1.59\\
\cline{2-8}
& $\u\u$-$\u$ & $\frac{3}{\pi}$ & $\frac{\sqrt{2}}{2\pi}$ & $\frac{1}{\pi}$ & $-\frac{\sqrt{2}}{2\pi}$ & \textbf{1.55} & 1.61\\
\cline{2-8}
& $\u$-$\u\u$ & $\frac{3}{\pi}$ & $-\frac{\sqrt{2}}{2\pi}$ & $\frac{1}{\pi}$ & $\frac{\sqrt{2}}{2\pi}$ & \textbf{1.55} & \textbf{1.64}\\
\hline
\hline
\multirow{14}{*}{1}
& $\u\u\u\d$ & $\frac{2}{\pi}$ & $\frac{2+\sqrt{2}}{\pi}$ & $-\frac{2}{\pi}$ & $\frac{2-\sqrt{2}}{\pi}$ & 2.50 & 2.70\\
\cline{2-8}
& $\u\u$- - & $\frac{2}{\pi}$ & $\frac{1+\sqrt{2}}{\pi}$ & 0 & $\frac{1-\sqrt{2}}{\pi}$ & 2.83 & 2.90\\
\cline{2-8}
& $\u\u\d\u$ & $\frac{2}{\pi}$ & $\frac{\sqrt{2}}{\pi}$ & $\frac{2}{\pi}$ & $-\frac{\sqrt{2}}{\pi}$ & 3.33 & 3.20\\
\cline{2-8}
& $\u\u$-$\d$ & $\frac{1}{\pi}$ & $\frac{4+3\sqrt{2}}{2\pi}$ & $-\frac{1}{\pi}$ & $\frac{4-3\sqrt{2}}{2\pi}$ & 4.65 & 4.99\\
\cline{2-8}
& $\u$-$\u$- & $\frac{2}{\pi}$ & $\frac{1}{2\pi}$ & 0 & $\frac{1}{2\pi}$ & 4.54 & 3.49\\
\cline{2-8}
& $\u\u\d$- & $\frac{1}{\pi}$ & $\frac{2+3\sqrt{2}}{2\pi}$ & $\frac{1}{\pi}$ & $\frac{2-3\sqrt{2}}{2\pi}$ & 5.53 & 5.88\\
\cline{2-8}
& $\u\u\d\d$ & 0 & $\frac{2+2\sqrt{2}}{\pi}$ & 0 & $\frac{2-2\sqrt{2}}{\pi}$ & 7.03 & 7.39\\
\cline{2-8}
& $\u$- -$\u$ & $\frac{2}{\pi}$ & 0 & $\frac{2}{\pi}$ & 0 & 7.21 & 5.15\\
\cline{2-8}
& $\u$-$\u\d$ & $\frac{1}{\pi}$ & $\frac{4+\sqrt{2}}{2\pi}$ & $-\frac{1}{\pi}$ & $\frac{4-\sqrt{2}}{2\pi}$ & 7.41 & 6.89\\
\cline{2-8}
& $\u$- - - & $\frac{1}{\pi}$ & $\frac{2+\sqrt{2}}{2\pi}$ & $\frac{1}{\pi}$ & $\frac{2-\sqrt{2}}{2\pi}$ & 9.70 & 9.35\\
\cline{2-8}
& $\u$- -$\d$ & 0 & $\frac{2+\sqrt{2}}{\pi}$ & 0 & $\frac{2-\sqrt{2}}{\pi}$ & 10.68 & 10.54\\
\cline{2-8}
& $\u$-$\d\u$ & $\frac{1}{\pi}$ & $\frac{\sqrt{2}}{2\pi}$ & $\frac{3}{\pi}$ & $\frac{-\sqrt{2}}{2\pi}$ & 12.89 & 13.03\\
\cline{2-8}
& $\u$-$\d$- & 0 & $\frac{2+2\sqrt{2}}{2\pi}$ & $\frac{2}{\pi}$ & $\frac{2-2\sqrt{2}}{2\pi}$ & 13.03 & 13.14\\
\cline{2-8}
& $\u$-$\d\d$ & $-\frac{1}{\pi}$ & $\frac{4+3\sqrt{2}}{2\pi}$ & $\frac{1}{\pi}$ & $\frac{4-3\sqrt{2}}{2\pi}$ & \textbf{13.11} & \textbf{13.21}\\
\hline
\hline
\multirow{9}{*}{2}
& $\u\d\u\d$ & 0 & $\frac{2}{\pi}$ & 0 & $\frac{2}{\pi}$ & 20.05 & 19.85\\
\cline{2-8}
& $\u\d$-$\d$ & $-\frac{1}{\pi}$ & $\frac{4+\sqrt{2}}{2\pi}$ & $\frac{1}{\pi}$ & $\frac{4-\sqrt{2}}{2\pi}$ & 21.86 & 21.91\\
\cline{2-8}
& $\u\d\u$- & $\frac{1}{\pi}$ & $\frac{2-\sqrt{2}}{2\pi}$ & $\frac{1}{\pi}$ & $\frac{2+\sqrt{2}}{2\pi}$ & 22.59 & 22.48\\
\cline{2-8}
& $\u\d\d\d$ & $-\frac{2}{\pi}$ & $\frac{2+\sqrt{2}}{\pi}$ & $\frac{2}{\pi}$ & $\frac{2-\sqrt{2}}{\pi}$ & 23.10 & 23.23\\
\cline{2-8}
& $\u\d$- - & 0 & $\frac{1}{\pi}$ & $\frac{2}{\pi}$ & $\frac{1}{\pi}$ & 24.12 & 24.22\\
\cline{2-8}
& $\u\d\d$- & $-\frac{1}{\pi}$ & $\frac{2+\sqrt{2}}{2\pi}$ & $\frac{3}{\pi}$ & $\frac{2-\sqrt{2}}{2\pi}$ & 25.10 & 25.30\\
\cline{2-8}
& $\u\d\u\u$ & $\frac{2}{\pi}$ & $\frac{-\sqrt{2}}{\pi}$ & $\frac{2}{\pi}$ & $\frac{\sqrt{2}}{\pi}$ & 26.02 & 26.06\\
\cline{2-8}
& $\u\d$-$\u$ & $\frac{1}{\pi}$ & $-\frac{\sqrt{2}}{2\pi}$ & $\frac{3}{\pi}$ & $\frac{\sqrt{2}}{2\pi}$ & 26.89 & 27.06\\
\cline{2-8}
& $\u\d\d\u$ & 0 & 0 & $\frac{4}{\pi}$ & 0 & \textbf{27.45} & \textbf{27.68}\\
\hline
\end{tabular}
\end{center}\vspace{-0.1in}
\end{table*}

From Tabs.~\ref{tab:boundary2} and \ref{tab:boundary1}, the classes $3C$ and $4C$ of our classification are exactly the same as $D3$ and $D4$ in \cite{Sot01, Sot02}, respectively. The class $1C$ and $2C$ of our classification are subsets of $D1$ and $D2$ in \cite{Sot01,Sot02}, respectively. The class $0C$ is a subset of $D0\cup D1$ in \cite{Sot01,Sot02}.

Similar to the classification of middle wires, we conclude that the classification on side wires does not depend on $\tau_0$. To verify our classification for technology with different coupling effects, we consider coupling factor $\lambda=1,2,\cdots ,13$, and show the results in Fig.~\ref{fig:boundary}.
Each class contains multiple lines, each of which represents a pattern in Tabs.~\ref{tab:boundary2} and \ref{tab:boundary1}.
For $\lambda\ge 1$, the ranges of delays in all classes do not overlap. Also, the delay in each subclass increases linearly with $\lambda$. This implies that our classification on side wires is valid provided that the coupling factor $\lambda$\ is at least 1.

In addition to being a finer classification, the new classification has no overlapping delays among different classes. Fig.~\ref{fig:mitclass} compares the simulated delays of different classes based on the classification in~\cite{Sot01,Sot02} and our new classification. In Fig.~\ref{fig:mitclass}, the grey bars identify the minimum and maximum simulated delays in every class. Note that only two extremes are important, and not all delay values in the grey bars are achievable by some transition patterns.
In Fig.~\ref{fig:mitclass}(a), the thick line segments denote the upper bounds for delay of each class based on Eq.~(\ref{Eq:1}). The upper bounds by the model in~\cite{Sot01,Sot02} overestimate the delays of $D1$ through $D4$ and underestimate the delay of $D0$. As shown in Fig.~\ref{fig:mitclass}(a), the actual delays in $D0$, $D1$, and $D2$ overlap with each other. Some patterns with smaller delays have potential to transmit information at a higher speed, but are categorized into a class with a larger delay bound. Thus, the classification by the model in~\cite{Sot01,Sot02} does not result in effective crosstalk avoidance codes. In contrast, the delays of different classes in our new classification do not overlap as shown in Fig.~\ref{fig:mitclass}(b), 4(c), and 4(d). By classifying patterns this way, we have a more accurate control of delays for transition patterns.

\begin{figure}[!tb]
\begin{minipage}[b]{1.0\linewidth}
  \centering
 \centerline{\epsfig{figure=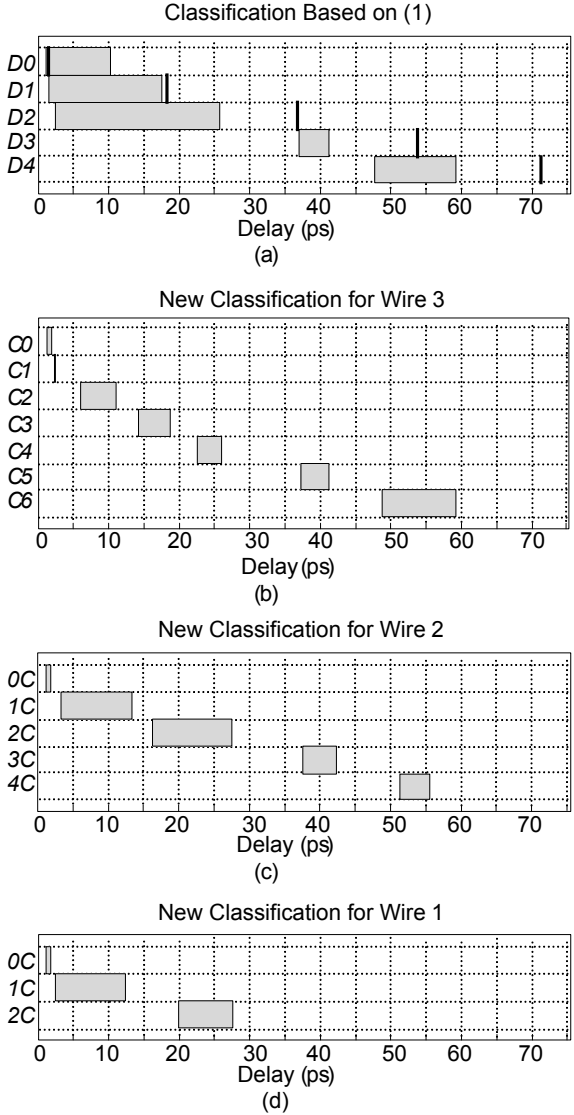,width=7.5cm}}
\end{minipage}
\caption{Simulated delays of different classes of transition patterns using (a) Classification based on (\ref{Eq:1}); (b) Classification with respect to the delay of the middle wire in a five-wire bus; (c) Classification with respect to the delay of wire 2 in a four-wire bus; (d) Classification with respect to the delay of wire 1 in a four-wire bus ($\lambda=12.24$ and $\tau_0=1.42$ps).}
\label{fig:mitclass}\vspace{-0.1in}
\end{figure}

\section{NEW MEMORYLESS CROSSTALK AVOIDANCE CODES}\label{sec:cacs}

\subsection{Previous CAC Design}
CACs reduce the crosstalk delay for on-chip global interconnects by encoding a $k$-bit data word $(x_1 x_2 \cdots x_k)$ into an $n$-bit ($n>k$) codeword $(c_1 c_2 \cdots c_n)$.
Two kinds of CACs, CACs with memory and memoryless CACs, have been investigated in the literature. CACs with memory, as shown in Fig.~\ref{fig:codec}(a), need to store all codebooks corresponding to different codewords $(c_1 c_2 \cdots c_n)$, since the encoding depends on the data word $(x_1 x_2 \cdots x_k)$ as well as the preceding codeword.
In contrast, memoryless CACs, as shown in Fig.~\ref{fig:codec}(b), require a single codebook to generate codewords for transmission, because the encoding depends on the data word only.
Hence, memoryless CACs are simpler to implement than CACs with memory.
We focus on memoryless CACs in this paper.

The codebook of a memoryless CAC satisfies the property that each codeword must be able to transition to every other codeword in the codebook with a delay less than the requirement. Most memoryless CACs in the literature are based on the model in~\cite{Sot01,Sot02}. The key idea is to eliminate undesirable patterns for transmission.
Existing memoryless CACs include OLCs, FPCs, FTCs, and FOCs~\cite{Dua01,Dua04,Vic01,Sri07}, which achieve a worst-case delay of $(1+\lambda)\tau_0$, $(1+2\lambda)\tau_0$, $(1+2\lambda)\tau_0$, and $(1+3\lambda)\tau_0$, respectively. As mentioned above, the scheme that was proposed to achieve a worst-case delay of $\tau_0$ is invalid since the model in \cite{Sot01,Sot02} underestimates the delays for $0C$. Thus, OLCs achieve the smallest worst-case delay $(1+\lambda)\tau_0$ among existing CACs.

There exist several methods to obtain a memoryless codebook based on pattern pruning, transition pruning, or recursive construction. The pattern pruning technique is quite straight forward, and gives a codebook with a smaller worst-case delay by eliminating some patterns. For example, FOCs cannot have both 010 and 101 patterns around any bit position, and FPCs are free of 010 and 101 patterns \cite{Sri07}.
The transition pruning technique \cite{Vic01} is based on graph theory.
This method first builds a transition graph with all possible codewords as nodes and all valid transitions as edges, and then finds a maximum clique. A clique is defined as a subgraph where every pair of nodes are connected with an edge. A maximum clique is defined as a clique of the largest possible size in a given graph. Since every pair of nodes is connected, a maximum clique in this graph constitutes a memoryless codebook with the largest size. The codebook generation method is based on exhaustive search. Although it is easy to get a maximum clique from a transition graph with a small $n$, the complexity increases rapidly with $n$. This is because the number of edges in an $n$-bit transition graph is upper bounded by $2^{n-1}(2^n-1)$, which increases exponentially with $n$.  In fact, it is an NP problem to find a maximum clique for given constraints \cite{Gar79}.
The recursive technique constructs an $(n+1)$-bit codebook from an $n$-bit codebook \cite{Dua01,Dua04}. Since for a small $n$, a largest codebook can be obtained easily via the second method, a codebook for an $n$-wire bus can be constructed recursively.

\begin{figure*}[!t]
\begin{minipage}[b]{1.0\linewidth}
  \centering
 \centerline{\epsfig{figure=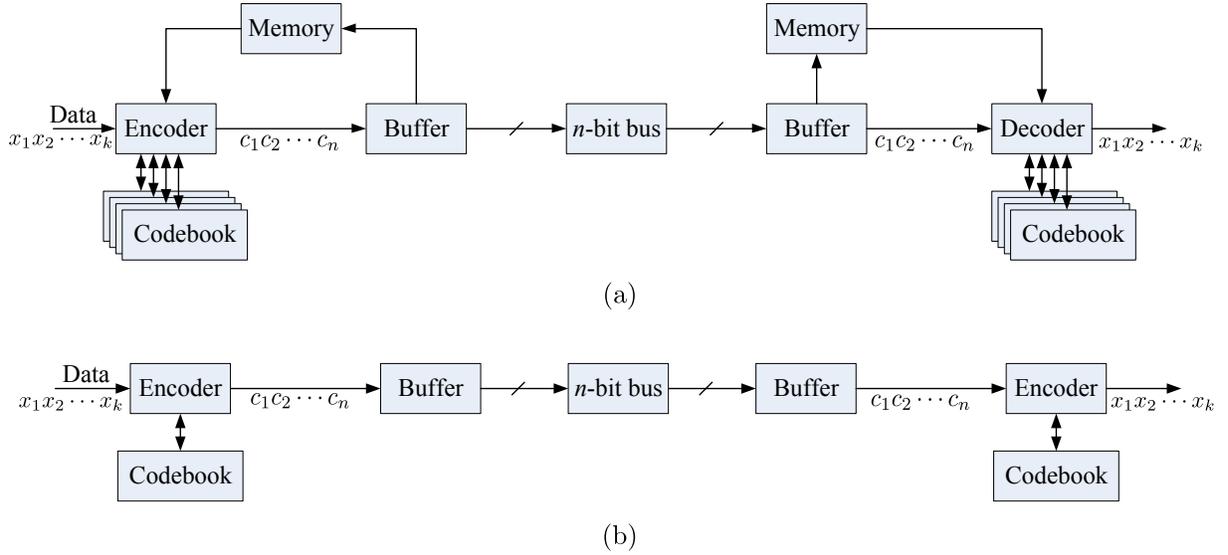,width=16cm}}
\end{minipage}
\caption{System model for (a) CACs with memory; (b) Memoryless CACs.}
\label{fig:codec}
\end{figure*}

\vspace{-0.1in}
\subsection{CAC Design with New Classification}
Since our classification of patterns is different from that in \cite{Sot01,Sot02}, the CAC designs should be reconsidered with our new classification. In the following, we first introduce a recursive method for codebook construction under different constraints, and then derive the size of codebooks.

In our work, we use the recursive method to obtain a memoryless codebook for the following two reasons. First, it is complex to apply the pattern pruning technique, since our new classification is based on transitions over five wires, and it is not clear which patterns have larger worst-case delays and should be removed. Second, it is hard to find a maximum clique for a transition graph with a large $n$.
In our method, we first start with a 5-bit codebook, obtained by searching for maximum cliques in a five-wire bus, and then build an $(n+1)$-bit codebook by appending '0' and '1' to codewords of an $n$-bit codebook while satisfying delay constraints.

Our new classifications partition patterns over five adjacent wires into seven classes, $C0$ to $C6$, and patterns over four adjacent wires into five classes, $0C$ to $4C$. Similar to the CAC design based on the model in~\cite{Sot01,Sot02}, the new classifications are conducive to the design of CACs by eliminating undesirable transition patterns with large worst-case delays.

To get valid 5-bit codebooks, we first assume the allowed patterns are from $C0$ to $Ci$ for $i=0,1,\cdots, 6$ in our classification for middle wires. Then, for the side wires, we assume patterns are from $0C$ to $jC$ based on the classification for side wires. Under these two assumptions, there are many configurations of constraints, which are referred as $(Ci,jC)$, where $i\in \{0, 1, \cdots, 6\}$ and $j \in \{0, 1, \cdots, 4\}$.

Since the worst-case delay of a bus is determined by the largest delays among all wires, for an $n$-bit ($n\ge 5$) bus under $(Ci,jC)$ we require that the worst-case delays on middle wires and side wires are close enough. By our classifications, we find $0C$ is close to $C0$, $1C$ close to $C2$ and $C3$, $2C$ close to $C4$, $3C$ close to $C5$, and $4C$ close to $C6$.
Hence, among all configurations of constraints $(Ci,jC)$, we only focus on $(C0,0C)$, $(C2,1C)$, $(C3,1C)$, $(C4,2C)$, $(C5,3C)$, and $(C6,4C)$. When $n\le 4$, the constraint $Ci$ cannot be enforced. Hence, the constraint $(Ci,jC)$ reduces to $jC$. The constraint $(C0,0C)$ appears to be too restrictive, and hence we do not investigate it in this paper.
The last configuration $(C6,4C)$ is trivial, since it allows arbitrary transitions.

In the following, we propose a scheme for finding an $n$-bit codebook $C_{(Ci,jC)}(n)$. For simplicity, we denote $C_{(Ci,jC)}(n)$ as $C(n)$ when there is no ambiguity about the constraint. First, for a five-wire bus under constraint $(Ci,jC)$, a pattern transition graph is obtained. We search the graph for the largest 5-bit codebooks. One or two 5-bit codebooks of maximum sizes exist for each constraint in Tab.~\ref{tab:cb}, where we denote an $n$-bit binary codeword $(c_1 c_2 \cdots c_n)$ as a decimal number $\sum_{i=1}^{n} c_i 2^{n-i}$ for simplicity.
In~\cite{Vic01}, a bit boundary in a set of codewords is said to be $01$-type if only codewords with 00, 01, and 11 are allowed across that boundary, and a bit boundary is said to be $10$-type when only codewords with 00, 10, and 11 are allowed across that boundary. It is shown that the largest clique for a given constraint has alternating boundary types. Thus, there are two largest cliques. Similarly, from Tab.~\ref{tab:cb}, we conjecture that the largest codebooks have alternating constraints, $C^0_5$ and $C^1_5$, for every five consecutive wires.
For constraint $(C4,2C)$, only one maximum 5-bit codebook exists. We assume $C^1_5$ is the same as $C^0_5$ for constraint $(C4,2C)$. Since we have two types of constraints, two largest codebooks for each constraint can be obtained, except for ($C4,2C$), where the two codebooks are the same.
Then we apply Alg.~\ref{alg:generic} to obtain $C(n)$.
In the initialization, we pick a 5-bit codebook $C_5 = C_5^0$. Then, the algorithm recursively appends one bit to the codewords in the codebook in each iteration. For $\mathbf{c}_k=(c_1c_2\cdots c_k)$, the appended bit $x$ needs to satisfy that the last five bits ($c_{k-3}c_{k-2}c_{k-1}c_k x$) form a codeword in $C_5^s$, which alternates between $C_5^0$ and $C_5^1$. If we pick the other 5-bit codebook $C_5=C_5^1$, we would obtain another codebook.

\begin{table*}[!htp]
\caption{Largest 5-bit codebook(s) under constraint $(Ci,jC)$.}\label{tab:cb}
\begin{center}
\begin{tabular}{|c|c|c|}
\hline
Constraint & $C_5^0$ & $C_5^1$\\
\hline
\multirow{2}{*}{$(C5,3C)$} & \{0, 1, 2, 3, 6, 7, 8, 9, 10, 11, 12, 14, & \{0, 1, 3, 4, 5, 6, 7, 12, 13, 14, 15, 16,\\
& 15, 16, 17, 18, 19, 24, 25, 26, 27, 28, 30, 31\} &  17, 19, 20, 21, 22, 23, 24, 25, 28, 29, 30, 31\}\\
\hline
$(C4,2C)$ & \{0, 1, 3, 6, 7, 12, 14, 15, 16, 17, 19, 24, 25, 28, 30, 31\} & \\
\hline
$(C3,1C)$ & \{0, 3, 14, 15, 24, 30, 31\} & \{0, 1, 7, 16, 17, 28, 31\}\\
\hline
$(C2,1C)$ & \{0, 3, 15, 24, 30, 31\} & \{0, 1, 7, 16, 28, 31\}\\
\hline
\end{tabular}
\end{center}
\end{table*}

\begin{table*}[!htp]
\caption{Expansion matrix for $(C3,1C)$, $(C4,2C)$, and $(C5,3C)$.}\label{tab:expmtx}
\begin{center}
\begin{tabular}{c c c}
$\mathbf{D}_{(C3,1C)} = \left[ \begin{smallmatrix}  0&0&0&0&0&1&1 \\ 0&0&0&0&1&0&0 \\ 0&1&0&0&0&0&0 \\ 1&0&0&0&0&0&0 \\ 0&0&1&1&0&0&0 \\ 0&1&0&0&0&0&0 \\ 1&0&0&0&0&0&0 \end{smallmatrix} \right]$, &
$ \mathbf{D}_{(C4,2C)} = \left[ \begin{smallmatrix}  0&0&0&0&0&0&0&0&0&0&0&0&0&0&1&1 \\ 0&0&0&0&0&0&0&0&0&0&0&0&0&1&0&0\\ 0&0&0&0&0&0&0&0&0&0&0&1&1&0&0&0 \\ 0&0&0&0&0&0&0&0&0&0&1&0&0&0&0&0 \\ 0&0&0&0&0&0&0&0&1&1&0&0&0&0&0&0 \\ 0&0&0&1&1&0&0&0&0&0&0&0&0&0&0&0 \\ 0&0&1&0&0&0&0&0&0&0&0&0&0&0&0&0 \\ 1&1&0&0&0&0&0&0&0&0&0&0&0&0&0&0 \\ 0&0&0&0&0&0&0&0&0&0&0&0&0&0&1&1 \\ 0&0&0&0&0&0&0&0&0&0&0&0&0&1&0&0 \\ 0&0&0&0&0&0&0&0&0&0&0&1&1&0&0&0 \\ 0&0&0&0&0&0&1&1&0&0&0&0&0&0&0&0 \\ 0&0&0&0&0&1&0&0&0&0&0&0&0&0&0&0 \\ 0&0&0&1&1&0&0&0&0&0&0&0&0&0&0&0 \\  0&0&1&0&0&0&0&0&0&0&0&0&0&0&0&0 \\ 1&1&0&0&0&0&0&0&0&0&0&0&0&0&0&0  \end{smallmatrix} \right]$, &
$ \mathbf{D}_{(C5,3C)} = \left[ \begin{smallmatrix}  0&0&0&0&0&0&0&0&0&0&0&0&0&0&0&0&0&0&0&0&0&0&1&1 \\ 0&0&0&0&0&0&0&0&0&0&0&0&0&0&0&0&0&0&0&0&0&1&0&0 \\ 0&0&0&0&0&0&0&0&0&0&0&0&0&0&0&0&0&0&0&1&1&0&0&0 \\ 0&0&0&0&0&0&0&0&0&0&0&0&0&0&0&0&0&1&1&0&0&0&0&0 \\ 0&0&0&0&0&0&0&0&0&0&0&0&0&0&0&1&1&0&0&0&0&0&0&0 \\ 0&0&0&0&0&0&0&0&0&0&0&0&0&1&1&0&0&0&0&0&0&0&0&0 \\ 0&0&0&0&0&0&0&0&0&0&0&1&1&0&0&0&0&0&0&0&0&0&0&0 \\ 0&0&0&0&0&0&0&0&0&0&1&0&0&0&0&0&0&0&0&0&0&0&0&0 \\ 0&0&0&0&0&0&0&0&1&1&0&0&0&0&0&0&0&0&0&0&0&0&0&0 \\ 0&0&0&0&0&0&1&1&0&0&0&0&0&0&0&0&0&0&0&0&0&0&0&0 \\ 0&0&0&0&1&1&0&0&0&0&0&0&0&0&0&0&0&0&0&0&0&0&0&0 \\ 0&0&1&1&0&0&0&0&0&0&0&0&0&0&0&0&0&0&0&0&0&0&0&0 \\ 1&1&0&0&0&0&0&0&0&0&0&0&0&0&0&0&0&0&0&0&0&0&0&0 \\ 0&0&0&0&0&0&0&0&0&0&0&0&0&0&0&0&0&0&0&0&0&0&1&1 \\ 0&0&0&0&0&0&0&0&0&0&0&0&0&0&0&0&0&0&0&0&0&1&0&0 \\ 0&0&0&0&0&0&0&0&0&0&0&0&0&0&0&0&0&0&0&1&1&0&0&0 \\ 0&0&0&0&0&0&0&0&0&0&0&0&0&0&0&0&0&1&1&0&0&0&0&0 \\ 0&0&0&0&0&0&0&0&0&0&0&1&1&0&0&0&0&0&0&0&0&0&0&0 \\ 0&0&0&0&0&0&0&0&0&0&1&0&0&0&0&0&0&0&0&0&0&0&0&0 \\ 0&0&0&0&0&0&0&0&1&1&0&0&0&0&0&0&0&0&0&0&0&0&0&0 \\ 0&0&0&0&0&0&1&1&0&0&0&0&0&0&0&0&0&0&0&0&0&0&0&0 \\ 0&0&0&0&1&1&0&0&0&0&0&0&0&0&0&0&0&0&0&0&0&0&0&0 \\ 0&0&1&1&0&0&0&0&0&0&0&0&0&0&0&0&0&0&0&0&0&0&0&0 \\ 1&1&0&0&0&0&0&0&0&0&0&0&0&0&0&0&0&0&0&0&0&0&0&0  \end{smallmatrix} \right]$.
\end{tabular}
\end{center}
\end{table*}

\begin{algorithm}[!htb]
  \caption{Codebook design under $(Ci,jC)$}
  \begin{algorithmic}
    \REQUIRE $C^0_5$, $C^1_5$, $n$;
    \STATE \textbf{Initialize}: $k=5$, $C_5=C^0_5$, $s=1$;
    \WHILE{$k \le n-1$}
    \FOR{$\forall \mathbf{c}_k =(c_1c_2\cdots c_k) \in C(k)$}
    \IF{$(c_{k-3}c_{k-2}c_{k-1}c_k 0) \in C^s_5$}
    \STATE append 0 to $\mathbf{c}_k$ and add the new codeword to $C(k+1)$;
    \ELSIF{$(c_{k-3}c_{k-2}c_{k-1}c_k 1) \in C^s_5$}
    \STATE append 1 to $\mathbf{c}_k$ and add the new codeword to $C(k+1)$;
    \ENDIF
    \ENDFOR
    \STATE $s=1-s$;
    \STATE $k=k+1$;
    \ENDWHILE
    \RETURN $C(n)$.
  \end{algorithmic}
  \label{alg:generic}
\end{algorithm}


The recursive construction allows us to derive the size of the codebooks. Let $\mathbf{V}_{(Ci,jC)}$ be an all-one  $m$-dimensional row vector ($m=|C_5^0|$) under constraint $(Ci,jC)$. Let $\mathbf{c}^s_k$ be a $k$-bit codeword with last five consecutive bits $(c_{k-4}c_{k-3}c_{k-2}c_{k-1}c_k) \in C^s_5$ for $s=0$ or $1$.
If a $0$ or $1$ can be appended to $\mathbf{c}^s_k$ to form a $(k+1)$-bit codeword whose last five bits $(c_{k-3}c_{k-2}c_{k-1}c_{k}c_{k+1}) \in C^{1-s}_5$, such an expansion is called a valid expansion. Otherwise, it is called an invalid expansion.
An expansion matrix is denoted as a $m \times m$ matrix $\mathbf{D}^s_{(Ci,jC)}$, where $\mathbf{D}^s_{(Ci,jC)}(i,j)=0$ denotes an invalid expansion and $\mathbf{D}_{(Ci,jC)}^s(i,j)=1$ a valid expansion from the $i$-th codeword in $C_5^s$ to the $j$-th codeword in $C_5^{1-s}$ under constraint $(Ci,jC)$.
Each row of $\mathbf{D}^s_{(Ci,jC)}$ has at most two ones, since each $k$-bit codeword can be appended to form at most two $(k+1)$-bit codewords whose last five bits satisfy the appropriate constraints.
Let $\mathbf{Y}$ be an $m\times m$ anti-diagonal matrix with all ones. Due to symmetry between $C_5^0$ and $C_5^1$, $\mathbf{D}^0$ and $\mathbf{D}^1$ satisfy $\mathbf{D}_{(Ci,jC)}^1 = \mathbf{Y} \mathbf{D}_{(Ci,jC)}^0 \mathbf{Y}$. Define $\mathbf{D}_{(Ci,jC)} = \mathbf{D}^0_{(Ci,jC)} \mathbf{Y} = \mathbf{Y} \mathbf{D}^1_{(Ci,jC)}$. We denote $\mathbf{V}_{(Ci,jC)}$ and $\mathbf{D}_{(Ci,jC)}$ as $\mathbf{V}$ and $\mathbf{D}$, respectively, when there is no ambiguity about the constraint.
Then, for $n\ge 5$, the number of codewords in an $n$-bit bus is equal to counting the valid transitions and is given by
\begin{equation}
\begin{array}{rl}  |C(n)| &= \mathbf{V} \mathbf{D^0} \mathbf{D^1} \cdots \mathbf{V}^T\\
&=\left\{\begin{array}{ll} \mathbf{V} (\mathbf{D}^0 \mathbf{YY} \mathbf{D}^1)^{\frac{n-5}{2}} \mathbf{V}^T & \mbox{ if $n$ is odd};\\
                            \mathbf{V} (\mathbf{D}^0 \mathbf{YY} \mathbf{D}^1)^{\frac{n-6}{2}} \mathbf{D}^0 \mathbf{YY} \mathbf{V}^T & \mbox{ if $n$ is even}; \end{array}\right. \\
&=\mathbf{V} \mathbf{D}^{n-5} \mathbf{YV}^T.
\end{array}
\end{equation}

In the following, we first focus on constraints $(C3,1C)$, $(C4,2C)$, and $(C5,3C)$. The codes based on these constraints are shown to have the same codebooks as OLCs, FPCs, and FOCs, respectively. Then, we consider constraint $(C2,1C)$, which would lead to codes with a smaller delay at the expense of a lower code rate.

\subsection{Codes Under $(C3,1C)$}

The one Lambda codes have a worst-case delay $(1+\lambda)\tau$. According to \cite{Sri07}, the worst-case delay $(1+\lambda)\tau$ can only be achieved \textbf{if and only if} the transitions $\uparrow \downarrow \times$, -$\uparrow$-, and $\uparrow$-$\uparrow$ plus their symmetric and complement versions (e.g. $\uparrow \downarrow\times$ and $\times \downarrow \uparrow$ are symmetric, and -$\downarrow$- is the complement of -$\uparrow$-) are avoided, where $\uparrow$, $\downarrow$, $\times$, and - denote 0$\rightarrow$1, 1$\rightarrow$0, don't care, and no transition, respectively. The first constraint of avoiding $\uparrow \downarrow \times$ ensures that a transition between any two codewords does not cause opposite transition on any wire. This condition is referred as a forbidden-transition (FT) condition. The second constraint of avoiding -$\uparrow$- ensures that 2C patterns are removed. This constraint ensures two adjacent bit boundaries cannot both be 01-type or 10-type, and is referred as a forbidden adjacent boundary pattern (FABP) condition \cite{Sri07}. The last two forbidden patterns give the constraint that no patterns 010 and 101 appear in the codeword, which is referred as a forbidden-pattern (FP) condition \cite{Sri07}. Codes satisfying these \textbf{necessary and sufficient} conditions are called one Lambda codes (OLCs). We denote the largest OLC codebook size for an $n$-bit bus as $G_n$, and $G_n$ is given by
\begin{equation}
G_n=G_{n-1}+G_{n-5}
\label{Eq:4}
\end{equation}
with initial conditions $G_1=2, G_2=3, G_3=4, G_4=5$, and $G_5=7$ \cite{Sri04}.

With our classification, we explore codes under constraint $(C3,1C)$. From Tab.~\ref{tab:cb}, the two largest 5-bit codebooks are given by $C_5^0$=\{0, 3, 14, 15, 24, 30, 31\} and $C_5^1$=\{0, 1, 7, 16, 17, 28, 31\}.
An $n$-bit codebook $C(n)$ can be obtained via Alg.~\ref{alg:generic}.
The number of codewords is given by
\begin{equation}
|C(n)|= \mathbf{V} \mathbf{D}^{n-5}_{(C3,1C)}\mathbf{V}^T \mbox{ for }  n\ge 5,
\label{Eq:C3}
\end{equation}
where $\mathbf{V}$ is a seven-dimensional all one vector and $\mathbf{D}_{(C3,1C)}$ is a $7\times 7$ expansion matrix as shown in Tab.~\ref{tab:expmtx}.
We further establish that the largest codebook sizes under constraint $(C3,1C)$ satisfy the recursion:
\begin{lemma}
  For $n\ge 8$, $|C_{(C3,1C)}(n)|$ is given by a recursion $|C_{(C3,1C)}(n)| = |C_{(C3,1C)}(n-2)| + |C_{(C3,1C)}(n-3)|$, with initial conditions $|C_{(C3,1C)}(n)|=$7, 9, 12, for $n=$5, 6, 7, respectively.
\label{lm:c31c1}
\end{lemma}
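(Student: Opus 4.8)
The plan is to work directly from the closed-form count $|C_{(C3,1C)}(n)| = \mathbf{V}\,\mathbf{D}^{\,n-5}_{(C3,1C)}\,\mathbf{V}^T$ established in Eq.~(\ref{Eq:C3}), where $\mathbf{V}$ is the all-one row vector and $\mathbf{D}_{(C3,1C)}$ is the $7\times 7$ expansion matrix listed in Tab.~\ref{tab:expmtx}. Since the claimed recursion $a_n = a_{n-2} + a_{n-3}$ has characteristic polynomial $x^3 - x - 1$, the whole statement reduces to exhibiting a degree-three annihilating relation for $\mathbf{D}$ when it acts on the distinguished vector $\mathbf{V}^T$. Writing $\mathbf{D}=\mathbf{D}_{(C3,1C)}$ for brevity, the target identity is
\begin{equation}
\mathbf{D}^3\,\mathbf{V}^T = \mathbf{D}\,\mathbf{V}^T + \mathbf{V}^T .
\label{eq:keyid}
\end{equation}
Note that $\mathbf{V}^T$ is the relevant initial vector because $\mathbf{Y}\mathbf{V}^T=\mathbf{V}^T$ (the anti-diagonal $\mathbf{Y}$ fixes the all-one vector), so the general formula $\mathbf{V}\mathbf{D}^{n-5}\mathbf{Y}\mathbf{V}^T$ collapses to $\mathbf{V}\mathbf{D}^{n-5}\mathbf{V}^T$ in this case.

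The key step is to verify \eqref{eq:keyid} by a single direct matrix-vector computation. Set $\mathbf{w}_0=\mathbf{V}^T$ and $\mathbf{w}_{j+1}=\mathbf{D}\mathbf{w}_j$; because each row of $\mathbf{D}$ has at most two ones, each $\mathbf{w}_{j+1}$ is read off by summing two coordinates of $\mathbf{w}_j$. One finds $\mathbf{w}_1=(2,1,1,1,2,1,1)^T$, $\mathbf{w}_2=(2,2,1,2,2,1,2)^T$, and $\mathbf{w}_3=(3,2,2,2,3,2,2)^T$, whence $\mathbf{w}_3=\mathbf{w}_1+\mathbf{w}_0$, which is exactly \eqref{eq:keyid}. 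Conceptually, this says the Krylov subspace $\mathrm{span}\{\mathbf{w}_0,\mathbf{w}_1,\mathbf{w}_2\}$ is $\mathbf{D}$-invariant and three-dimensional, so although $\mathbf{D}$ is $7\times 7$, its action restricted to the cyclic subspace generated by $\mathbf{V}^T$ has minimal polynomial $x^3-x-1$; this is what reduces a potential order-seven recursion to order three. To propagate \eqref{eq:keyid} into the sequence, I would left-multiply it by $\mathbf{V}\mathbf{D}^{\,n-8}$ for any $n\ge 8$, obtaining $\mathbf{V}\mathbf{D}^{\,n-5}\mathbf{V}^T=\mathbf{V}\mathbf{D}^{\,n-7}\mathbf{V}^T+\mathbf{V}\mathbf{D}^{\,n-8}\mathbf{V}^T$, i.e.\ $|C_{(C3,1C)}(n)|=|C_{(C3,1C)}(n-2)|+|C_{(C3,1C)}(n-3)|$; the exponent $n-8\ge 0$ is precisely the source of the $n\ge 8$ hypothesis.

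Finally, the initial conditions fall out of the same computation: $|C_{(C3,1C)}(5)|=\mathbf{V}\mathbf{w}_0=7$, $|C_{(C3,1C)}(6)|=\mathbf{V}\mathbf{w}_1=9$, and $|C_{(C3,1C)}(7)|=\mathbf{V}\mathbf{w}_2=12$, matching the sizes of the $5$-bit codebook in Tab.~\ref{tab:cb} and the recursive construction of Alg.~\ref{alg:generic}. I do not expect a genuine obstacle here: the proof is essentially the verification of \eqref{eq:keyid} plus a routine propagation argument, and the only subtlety is the recognition (rather than a hard derivation) that $\mathbf{V}^T$ lies in a low-dimensional invariant subspace, so that one need never compute the full degree-seven characteristic polynomial of $\mathbf{D}$. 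The main thing to be careful about is bookkeeping the index shift so that the recursion is stated for the correct range $n\ge 8$.
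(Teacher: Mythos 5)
Your proof is correct, but it takes a genuinely different route from the paper's. The paper computes the full characteristic polynomial of the $7\times 7$ expansion matrix, $\det(\lambda\mathbf{I}-\mathbf{D})=\lambda^7-\lambda^5-\lambda^4$, and invokes Cayley--Hamilton to get the matrix identity $\mathbf{D}^7=\mathbf{D}^5+\mathbf{D}^4$, from which the recursion follows by sandwiching with $\mathbf{V}$ and $\mathbf{V}^T$. You instead verify the vector identity $\mathbf{D}^3\mathbf{V}^T=\mathbf{D}\mathbf{V}^T+\mathbf{V}^T$ by three explicit matrix--vector products and propagate it with $\mathbf{V}\mathbf{D}^{n-8}$. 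Your computation checks out ($\mathbf{w}_1=(2,1,1,1,2,1,1)^T$, $\mathbf{w}_2=(2,2,1,2,2,1,2)^T$, $\mathbf{w}_3=(3,2,2,2,3,2,2)^T=\mathbf{w}_1+\mathbf{w}_0$, with $\mathbf{V}\mathbf{w}_0=7$, $\mathbf{V}\mathbf{w}_1=9$, $\mathbf{V}\mathbf{w}_2=12$), and your approach buys something real: since $\lambda^7-\lambda^5-\lambda^4=\lambda^4(\lambda^3-\lambda-1)$ and $\mathbf{D}$ is singular, the identity $\mathbf{D}^3=\mathbf{D}+\mathbf{I}$ cannot hold as a matrix equation, so the Cayley--Hamilton relation only yields the recursion directly for $n\ge 12$ (and the minimal polynomial still carries a factor $\lambda^k$ with $k\ge 2$, so at best $n\ge 10$); the cases $n=8,\dots,11$ require the restriction to the cyclic subspace generated by $\mathbf{V}^T$, which is exactly what your Krylov argument supplies. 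In that sense your proof justifies the stated range $n\ge 8$ more tightly than the paper's does as written, at the cost of a computation specific to the vector $\mathbf{V}^T$ rather than a statement about $\mathbf{D}$ itself. Your side remark that $\mathbf{Y}\mathbf{V}^T=\mathbf{V}^T$ reconciles Eq.~(\ref{Eq:C3}) with the general formula and is also correct.
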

See the appendix for the proof. In fact, we can further relate these codes with OLCs by the following:

\begin{thm}
  The codes under ($C3,1C$) have the same codebooks as OLCs. Hence, $G_n=|C_{(C3,1C)}(n)|$.
  \label{thm:c31c}
\end{thm}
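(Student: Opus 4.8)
The plan is to prove the stronger assertion first---that the recursive construction of Alg.~\ref{alg:generic} under $(C3,1C)$ generates exactly the maximum OLC codebooks---and then read off the size identity $G_n=|C_{(C3,1C)}(n)|$ as the immediate corollary (matching the ``Hence'' in the statement). I would argue the set equality at the level of the construction rather than codeword-by-codeword. First I would check that the two seed codebooks $C_5^0,C_5^1$ listed in Tab.~\ref{tab:cb} for $(C3,1C)$ are precisely the two maximum $5$-bit OLC codebooks (both avoid $010$ and $101$, have the required alternating boundary type, and have size $G_5=7$). Then I would show that the append rule of Alg.~\ref{alg:generic}---extend a codeword by one bit so that its trailing five bits stay in the alternating seed $C_5^s$---coincides with the recursive OLC construction underlying Eq.~(\ref{Eq:4}). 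Identical seeds plus an identical local extension rule force the two codebooks to agree for every $n\ge 5$.

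The real content is verifying that the $(C3,1C)$ window constraints are equivalent to the OLC conditions FT, FABP, and FP. Using Tab.~\ref{tab:1} I would establish that excluding the middle-wire classes $C4\cup C5\cup C6$ is equivalent to keeping each wire's delay at or below $(1+\lambda)\tau_0$ when that wire is the central ($\u$) wire of its five-wire window, and likewise, using Tabs.~\ref{tab:boundary2} and \ref{tab:boundary1}, that excluding the side classes $2C,3C,4C$ controls wires $1$ and $2$. The forbidden events then match the OLC primitives: an adjacent opposite pair ($\u\d$ or $\d\u$) is FT, an isolated transition (-$\u$- or -$\d$-) is FABP, and a same-direction pair across a static wire ($\u$-$\u$ or $\d$-$\d$) is FP. The \emph{delicate} point---and the main obstacle---is that these primitives need not appear centered: a pattern such as -$\,$-$\u\u\d$ sits in the allowed class $C3$ as a window centered on wire $3$, yet its $\u\d$ on the right is an FT violation that is caught only by the window centered on wire $4$ (where it falls into $C5$). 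Hence the equivalence must be argued over the whole sliding family of windows of a codeword transition, never window-by-window in isolation; the exhaustive case analysis and the bookkeeping of which neighboring window detects which violation is where the proof is laborious and error-prone.

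For an independent and fully rigorous confirmation of the sizes I would combine Lemma~\ref{lm:c31c1} with Eq.~(\ref{Eq:4}). The OLC recursion $G_n=G_{n-1}+G_{n-5}$ has characteristic polynomial $x^5-x^4-1$, which factors as $(x^3-x-1)(x^2-x+1)$, while the $(C3,1C)$ recursion $|C(n)|=|C(n-2)|+|C(n-3)|$ has characteristic polynomial $x^3-x-1$. Since $x^3-x-1$ divides $x^5-x^4-1$, every sequence obeying the three-term recursion also obeys the five-term recursion, so $|C_{(C3,1C)}(n)|$ satisfies the OLC recursion as well. Evaluating both sequences at five consecutive indices $n=5,6,7,8,9$ yields $7,9,12,16,21$ in each case, so the two sequences agree on five consecutive values and therefore coincide for all $n\ge 5$. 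This gives $G_n=|C_{(C3,1C)}(n)|$ directly and cross-validates the construction-level set-equality argument above.
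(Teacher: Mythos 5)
Your first paragraph is essentially the paper's own proof: the paper argues that the $5$-bit seeds coincide with the maximum $5$-bit OLC codebooks, that the necessary-and-sufficient OLC conditions induce the same expansion matrix as Alg.~\ref{alg:generic} under $(C3,1C)$, and then concludes by induction on $n$ (plus the degenerate case $n\le 4$, where $(C3,1C)$ reduces to $1C$, which you omit but which is minor). You are more candid than the paper about where the real work hides: the paper simply asserts that the OLC conditions ``define the same expansion matrix,'' whereas you correctly point out that equating the class-based window constraints with FT, FABP, and FP requires reasoning over the whole sliding family of five-wire windows, since a forbidden primitive may be off-center in the window that contains it; neither you nor the paper actually carries out that case analysis, so on the set-equality claim the two arguments have the same (incomplete) level of rigor. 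Where you genuinely depart from the paper is the third paragraph: deducing $G_n=|C_{(C3,1C)}(n)|$ purely from Lemma~\ref{lm:c31c1} and Eq.~(\ref{Eq:4}) via the factorization $x^5-x^4-1=(x^3-x-1)(x^2-x+1)$ and agreement on the five consecutive values $7,9,12,16,21$ at $n=5,\dots,9$. I checked the factorization and the initial values; this is correct and gives an airtight, self-contained proof of the numerical identity that the paper does not have (the paper obtains the sizes only as a corollary of the set equality). The one caveat to keep in view is that this recursion argument proves only $G_n=|C_{(C3,1C)}(n)|$, not that the codebooks are literally the same sets, so it cross-validates but cannot replace the construction-level argument for the full statement of the theorem.
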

See the appendix for the proof. Theorem~\ref{thm:c31c} implies that the codes under constraint $(C3,1C)$ are equivalent to the class of OLC codes.

\subsection{Codes Under $(C4,2C)$}

The $(1+2\lambda)$ codes have a worst-case delay of $(1+2\lambda)\tau$. No necessary and sufficient condition is known for a code to be a $(1+2\lambda)$ code. Two sufficient conditions FT and FP are found, which lead to two families of $(1+2\lambda)$ codes, FTC and FPC, respectively. The size of an FTC codebook for an $n$-wire bus is given by $F_{n+2}$, where $F_n$ is the Fibonacci sequence that satisfies $F_{n+2}=F_{n+1}+F_n$ and has initial conditions $F_1=F_2=1$ \cite{Vic01}. The FPCs for an $n$-wire bus have a larger codebook size $2F_{n+1}$ \cite{Dua01}.

With our classification, we explore codes under constraint $(C4,2C)$. From Tab.~\ref{tab:cb}, only one largest 5-bit codebook is found $C^0_5$=\{0, 1, 3, 6, 7, 12, 14, 15, 16, 17, 19, 24, 25, 28, 30, 31\}.
An $n$-bit codebook $C(n)$ can be obtained via Alg.~\ref{alg:generic} by setting $C^1_5=C^0_5$. The number of codewords is given by
\begin{equation}
  |C(n)|= \mathbf{V}\mathbf{D}^{n-5}_{(C4,2C)} \mathbf{V}^T \mbox{ for } n\ge 5
  \label{Eq:C4}
\end{equation}
where $\mathbf{V}$ is a 16-dimensional all one vector and $\mathbf{D}_{(C4,2C)}$ is a $16\times 16$ expansion matrix as shown in Tab.~\ref{tab:expmtx}.
We further establish that the largest codebook sizes under constraint $(C4,2C)$ satisfy the recursion:
\begin{lemma}
  For $n\ge 9$, $|C_{(C4,2C)}(n)|$ can be simplified as recursion $|C_{(C4,2C)}(n)| = 2|C_{(C4,2C)}(n-1)| -|C_{(C4,2C)}(n-2)| + |C_{(C4,2C)}(n-4)|$, with boundary conditions $|C_{(C4,2C)}(n)|=$16, 26, 42, 68, for $n=$5, 6, 7, 8, respectively.
  \label{lm:c42c1}
\end{lemma}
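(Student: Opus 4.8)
The plan is to extract the recursion directly from the closed matrix expression (\ref{Eq:C4}) for the codebook size, reading the claimed recurrence as the statement that a fixed degree-four polynomial annihilates the sequence. Write $a_n := |C_{(C4,2C)}(n)|$ and $\mathbf{D} := \mathbf{D}_{(C4,2C)}$, the $16\times 16$ matrix of Tab.~\ref{tab:expmtx}; since $\mathbf{V}$ is the all-one vector we have $\mathbf{Y}\mathbf{V}^T = \mathbf{V}^T$, so (\ref{Eq:C4}) reads $a_n = \mathbf{V}\,\mathbf{D}^{n-5}\,\mathbf{V}^T$ for $n\ge 5$. Setting $p(x) = x^4 - 2x^3 + x^2 - 1$, the proposed recurrence $a_n = 2a_{n-1} - a_{n-2} + a_{n-4}$ is equivalent, for $n\ge 9$, to
\begin{equation}
a_n - 2a_{n-1} + a_{n-2} - a_{n-4} = \mathbf{V}\,\mathbf{D}^{n-9}\bigl(\mathbf{D}^4 - 2\mathbf{D}^3 + \mathbf{D}^2 - \mathbf{I}\bigr)\mathbf{V}^T ,
\end{equation}
where the exponent $n-9$ is nonnegative exactly when $n\ge 9$, which is precisely why the recursion is asserted only from that point.

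Thus the whole statement reduces to the single vector identity $p(\mathbf{D})\mathbf{V}^T = \mathbf{0}$, i.e.\ $\mathbf{D}^4\mathbf{V}^T = 2\mathbf{D}^3\mathbf{V}^T - \mathbf{D}^2\mathbf{V}^T + \mathbf{V}^T$. I would verify it by forming the five vectors $\mathbf{V}^T, \mathbf{D}\mathbf{V}^T, \ldots, \mathbf{D}^4\mathbf{V}^T$ and checking the linear relation among them. This is a finite and hand-checkable computation: every row of $\mathbf{D}$ has at most two ones, so each application of $\mathbf{D}$ replaces a coordinate by a sum of at most two coordinates, and the anti-diagonal symmetry $\mathbf{D}^1 = \mathbf{Y}\mathbf{D}^0\mathbf{Y}$ (which, because $C_5^1 = C_5^0$ for this constraint, makes $\mathbf{D}$ commute with $\mathbf{Y}$-conjugation) together with $\mathbf{Y}\mathbf{V}^T = \mathbf{V}^T$ forces the entries of each $\mathbf{D}^k\mathbf{V}^T$ to occur in equal pairs, collapsing the sixteen coordinates to a small set of distinct integers to track.

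It then remains to fix the base cases $a_5, a_6, a_7, a_8 = 16, 26, 42, 68$, which I would read off directly: $a_5 = |C_5^0| = 16$, and $a_6, a_7, a_8$ follow either by evaluating $\mathbf{V}\mathbf{D}^{n-5}\mathbf{V}^T$ (equivalently, summing the valid expansions counted by $\mathbf{D}, \mathbf{D}^2, \mathbf{D}^3$) or by enumerating the output of Alg.~\ref{alg:generic} for these short lengths. As an independent check, note that $p(x) = (x^2 - x - 1)(x^2 - x + 1)$, so any sequence obeying a Fibonacci-type recurrence automatically obeys the claimed one; this is consistent with the base values $16, 26, 42, 68$ and foreshadows the later identification of these codes with FPCs.

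The only substantive step is the vector identity $p(\mathbf{D})\mathbf{V}^T = \mathbf{0}$; everything else is bookkeeping, and I expect this to be the main obstacle---not because it is deep, but because it forces one to handle the explicit $16\times 16$ matrix and to organize the computation so that it is verifiable by hand. A cleaner route that sidesteps the large matrix is to aggregate codewords by their length-four suffix, which is the state determining which bit may legally be appended, and to set up a transfer system among the resulting suffix-classes; eliminating the auxiliary counts should reproduce $p(x)$ as the characteristic polynomial of a much smaller matrix. I would attempt this aggregation first, and fall back on the direct verification of $p(\mathbf{D})\mathbf{V}^T = \mathbf{0}$ if the suffix-classes fail to close into a self-contained system. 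Should it turn out that $p(\mathbf{D})\mathbf{V}^T \neq \mathbf{0}$, the recursion would still follow from the weaker sandwiched identity $\mathbf{V}\,\mathbf{D}^k\,p(\mathbf{D})\,\mathbf{V}^T = 0$ for all $k\ge 0$, which I would then obtain from the minimal polynomial of $\mathbf{D}$ along $\mathbf{V}$.
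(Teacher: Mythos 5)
Your proposal is correct and follows the same basic strategy as the paper: read the codebook size as $a_n=\mathbf{V}\mathbf{D}^{n-5}\mathbf{V}^T$, establish a degree-four polynomial identity in $\mathbf{D}$, sandwich it between $\mathbf{V}$ and $\mathbf{V}^T$, and check the base cases from Eq.~(\ref{Eq:C4}). The one substantive difference is which identity you invoke, and here your version is actually the more careful one. The paper derives $\mathbf{D}^{16}=2\mathbf{D}^{15}-\mathbf{D}^{14}+\mathbf{D}^{12}$ from the characteristic polynomial via Cayley--Hamilton; since that polynomial is $\lambda^{12}\,p(\lambda)$ with $p(\lambda)=\lambda^4-2\lambda^3+\lambda^2-1$, this identity only yields the recursion for $n-5\ge 16$, i.e.\ $n\ge 21$, not $n\ge 9$ as claimed. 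Moreover $p(\mathbf{D})=\mathbf{0}$ itself cannot hold, because $\mathbf{D}$ is singular while $p(0)=-1$, so some argument localized to the vector $\mathbf{V}^T$ is genuinely needed. Your reduction to the annihilation identity $p(\mathbf{D})\mathbf{V}^T=\mathbf{0}$ (or, failing that, to $\mathbf{V}\mathbf{D}^k p(\mathbf{D})\mathbf{V}^T=0$ for all $k\ge 0$, the minimal polynomial along $\mathbf{V}$) is exactly what closes this gap and justifies the starting point $n\ge 9$. The remaining caveat is that you describe rather than execute the finite verification of $p(\mathbf{D})\mathbf{V}^T=\mathbf{0}$ (or of the sandwiched identity); that computation must actually be carried out, but it is a bounded check on a $16\times 16$ matrix with at most two ones per row, and your observation that $p(x)=(x^2-x-1)(x^2-x+1)$ together with the Fibonacci-type values $16,26,42,68,\dots$ makes the outcome essentially certain. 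The suffix-aggregation (transfer-matrix) alternative you sketch is also sound and would likely give the cleanest derivation of $p$.
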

See the appendix for the proof. Again, we can relate these codes to existing CACs by the following:

\begin{thm}
  The codes under ($C4,2C$) have the same codebooks as FPCs. Hence, $2F_{n+1} = |C_{(C4,2C)}(n)|$.
  \label{thm:c42c}
\end{thm}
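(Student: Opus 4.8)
The plan is to prove the stronger set-equality $C_{(C4,2C)}(n) = \mathrm{FPC}(n)$, where $\mathrm{FPC}(n)$ denotes the set of all $n$-bit words free of the patterns $010$ and $101$; the size identity $2F_{n+1} = |C_{(C4,2C)}(n)|$ then follows at once from the known FPC codebook size $2F_{n+1}$.

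First I would settle the base case at the level of five-bit words. Writing out the sixteen decimal entries of $C_5^0$ in binary, one checks by inspection that none contains $010$ or $101$, so $C_5^0 \subseteq \mathrm{FPC}(5)$. Since a length-$5$ word avoiding both $010$ and $101$ is counted by $2F_6 = 16 = |C_5^0|$, this inclusion is actually an equality: $C_5^0$ is \emph{exactly} the set of FP-free five-bit words. Because $(C4,2C)$ has $C_5^1 = C_5^0$, the construction in Alg.~\ref{alg:generic} carries no boundary alternation, and (interpreting the step as retaining every valid extension, consistent with the ``at most two ones per row'' remark and the count $\mathbf{V}\mathbf{D}^{n-5}\mathbf{V}^T$) it simply appends each bit $x$ for which the trailing window $(c_{k-3}c_{k-2}c_{k-1}c_k x)$ lies in $C_5^0$. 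Consequently $C_{(C4,2C)}(n)$ is precisely the set of $n$-bit words all of whose length-$5$ windows belong to $C_5^0$.

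Next I would reduce the window condition to the global FP condition. Since $010$ and $101$ have length three, any occurrence in a word sits inside some length-$5$ window; conversely, a word all of whose length-$5$ windows are FP-free has every length-$3$ window FP-free and is therefore FP-free. Combining this with $C_5^0 = \mathrm{FPC}(5)$ yields, for every $n \ge 5$, that $\mathbf{c} \in C_{(C4,2C)}(n)$ iff every length-$5$ window of $\mathbf{c}$ lies in $C_5^0$ iff $\mathbf{c}$ avoids $010$ and $101$ iff $\mathbf{c} \in \mathrm{FPC}(n)$. The forward inclusion $\mathrm{FPC}(n) \subseteq C_{(C4,2C)}(n)$ I would argue inductively: the five-bit prefix of an FP-free word lies in $C(5) = C_5^0$, and at each step the trailing window of an FP-free word is again FP-free, hence in $C_5^0$, so the algorithm keeps it; the window characterization gives the reverse inclusion.

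Finally, the codebooks coinciding yields $|C_{(C4,2C)}(n)| = |\mathrm{FPC}(n)| = 2F_{n+1}$. As an independent consistency check one may verify that $P_n = 2F_{n+1}$ satisfies the order-four recursion of Lemma~\ref{lm:c42c1}: using $2F_{n+1} = 2F_n + 2F_{n-1}$, the claimed relation $P_n = 2P_{n-1} - P_{n-2} + P_{n-4}$ collapses to $P_{n-2} = P_{n-3} + P_{n-4}$, which is again Fibonacci, while the boundary values $16, 26, 42, 68$ match $2F_6,\dots,2F_9$. The main obstacle is the clean identification $C_5^0 = \mathrm{FPC}(5)$ together with the care needed to show the recursion produces \emph{all} FP-free words and nothing more; once the five-bit codebook is pinned down as the FP-free set, the window-versus-global argument is routine.
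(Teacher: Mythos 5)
Your proof is correct, and it takes a more self-contained route than the paper's. The paper argues at the level of \emph{constructions}: it cites the known recursive construction of FPCs from an $n$-bit to an $(n+1)$-bit codebook, asserts that the FP condition ``defines the same expansion matrix'' as the $(C4,2C)$ constraint, observes that the two recursions agree at $n=5$, and concludes the codebooks coincide for all $n\ge 5$ (with a separate remark that for $n\le 4$ the constraint degenerates to $2C$ and still matches). You instead bypass the external FPC recursion entirely: you pin down $C_5^0$ as \emph{exactly} the set of FP-free five-bit words (by inspection plus the count $2F_6=16$), characterize $C_{(C4,2C)}(n)$ intrinsically as the words all of whose length-$5$ windows lie in $C_5^0$, and close the argument with the local-to-global observation that a length-$3$ forbidden pattern always sits inside some length-$5$ window. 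This buys you a verification of the step the paper leaves implicit (that the expansion matrices really do agree), at the cost of having to be explicit about the intended semantics of Alg.~\ref{alg:generic} --- your reading that \emph{both} valid extensions are retained, rather than the first one found, is the right one and is forced by the counting formula $\mathbf{V}\mathbf{D}^{n-5}\mathbf{V}^T$. Two minor points: you silently restrict to $n\ge 5$, whereas the paper also disposes of $n\le 4$; and your closing consistency check against Lemma~\ref{lm:c42c1} is a nice sanity check but not logically needed.
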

See the appendix for the proof. Since FPCs and our codes under $(C4,2C)$ can be obtained by excluding $D3$ plus $D4$ patterns and $C5$ plus $C6$ patterns, respectively, Theorem~\ref{thm:c42c} is not surprising given that $C5$ and $C6$ are the same as $D3$ and $D4$, respectively. Theorem~\ref{thm:c42c} implies that results in the literature regarding FPCs are also applicable to codes under constraint ($C4,2C$).

\subsection{Codes Under $(C5,3C)$}

The $(1+3\lambda)$ codes have a worst-case delay of $(1+3\lambda)\tau$, which can be achieved \textbf{if and only if} $\downarrow \uparrow \downarrow$ and $\uparrow \downarrow \uparrow$ are avoided. So the \textbf{necessary and sufficient} condition for the $(1+3\lambda)$ codes is that the codebook cannot have both 010 and 101 appearing centered around any bit position, which is referred as a forbidden-overlap (FO) condition. Codes satisfying the FO condition are called FOCs. It is shown that the largest FOC codebook for an $n$-bit bus is given by $T_{n+2}$, where $T_n=T_{n-1}+T_{n-2}+T_{n-3}$ is the tribonacci number sequence with initial conditions $T_1=1,$ $T_2=1$, and $T_3=2$ \cite{Sri07}.

With our classification, we explore codes under constraint $(C5,3C)$. Two largest 5-bit codebooks $C_5^0$=\{0, 1, 2, 3, 6, 7, 8, 9, 10, 11, 12, 14, 15, 16, 17, 18, 19, 24, 25, 26, 27, 28, 30, 31\} and $C_5^1$=\{0, 1, 3, 4, 5, 6, 7, 12, 13, 14, 15, 16, 17, 19, 20, 21, 22, 23, 24, 25, 28, 29, 30, 31\} are found. Via Alg.~\ref{alg:generic}, an $n$-bit codebook $C(n)$ can be obtained. The number of codewords is given by
\begin{equation}
  |C(n)|= \mathbf{V}\mathbf{D}^{n-5}_{(C5,3C)} \mathbf{V}^T \mbox{ for }  n\ge 5,
  \label{Eq:C5}
\end{equation}
where $\mathbf{V}$ is a 24-dimensional all one vector and $\mathbf{D}_{(C5,3C)}$ is a $24\times 24$ expansion matrix as shown in Tab.~\ref{tab:expmtx}.

We further establish that the largest codebook sizes under constraint $(C5,3C)$ satisfy the recursion:
\begin{lemma}
  For $n\ge 8$, $|C_{(C5,3C)}(n)|$ can be simplified as recursion $|C_{(C5,3C)}(n)| = |C_{(C5,3C)}(n-1)| -|C_{(C5,3C)}(n-2)| + |C_{(C5,3C)}(n-3)|$, with boundary conditions $|C_{(C5,3C)}(n)|=$24,44,81, for $n=$5, 6, 7, respectively.
  \label{lm:c53c1}
\end{lemma}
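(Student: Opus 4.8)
The plan is to read off $a_n := |C_{(C5,3C)}(n)|$ from the matrix-power formula~(\ref{Eq:C5}), $a_n=\mathbf{V}\,\mathbf{D}^{\,n-5}\,\mathbf{V}^T$, where $\mathbf{D}=\mathbf{D}_{(C5,3C)}$ is the explicit $24\times 24$ expansion matrix of Tab.~\ref{tab:expmtx} and $\mathbf{V}$ is the all-one row vector, and to treat $\{a_n\}$ as a constant-coefficient linear-recurrent sequence. The order of the \emph{minimal} recursion that such a sequence obeys equals the degree of the minimal polynomial of $\mathbf{D}$ relative to the cyclic vector $\mathbf{V}^T$, and this degree can be far below $24$. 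The precedent is $(C3,1C)$ in Lemma~\ref{lm:c31c1}: there the order-$5$ OLC recursion~(\ref{Eq:4}) collapses to an order-$3$ relation because $x^5-x^4-1=(x^3-x-1)(x^2-x+1)$, so only the cubic factor is ``excited'' by the initial data. My whole task is therefore to (i) identify the cubic that governs $\{a_n\}$ and (ii) verify the three stated initial values.

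For step (i), let $p(x)$ be the monic cubic associated with the claimed recursion. The key reduction is that it suffices to prove the \emph{single} vector identity $p(\mathbf{D})\,\mathbf{V}^T=\mathbf{0}$. Indeed, once this holds, for every $n\ge 8$ we get
\[
  \mathbf{V}\,\mathbf{D}^{\,n-8}\,p(\mathbf{D})\,\mathbf{V}^T
  =\mathbf{V}\,\mathbf{D}^{\,n-8}\,\mathbf{0}=0,
\]
and expanding $p(\mathbf{D})$ turns the left-hand side into exactly the claimed three-term relation among $a_n,a_{n-1},a_{n-2},a_{n-3}$. Setting $\mathbf{s}_0=\mathbf{V}^T$ and $\mathbf{s}_{i+1}=\mathbf{D}\,\mathbf{s}_i$, the identity $p(\mathbf{D})\mathbf{V}^T=\mathbf{0}$ is just the linear dependence among the iterated row-sum vectors $\mathbf{s}_0,\mathbf{s}_1,\mathbf{s}_2,\mathbf{s}_3$ dictated by $p$; because $\mathbf{D}$ is given explicitly with at most two ones per row, computing $\mathbf{s}_1,\mathbf{s}_2,\mathbf{s}_3$ and checking the dependence is a finite, mechanical verification read off from Tab.~\ref{tab:expmtx}.

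The initial values follow directly from~(\ref{Eq:C5}): $a_5=\mathbf{V}\mathbf{I}\mathbf{V}^T=|C_5^0|=24$, while $a_6=\mathbf{V}\mathbf{D}\mathbf{V}^T$ and $a_7=\mathbf{V}\mathbf{D}^2\mathbf{V}^T$ evaluate to $44$ and $81$, matching the codebook sizes produced by Alg.~\ref{alg:generic}. I expect the genuine obstacle to be the structural content behind $p(\mathbf{D})\mathbf{V}^T=\mathbf{0}$: one must show that the $24$-dimensional transfer operator acts on $\mathbf{V}^T$ through an invariant subspace of dimension at most three. The natural levers are the symmetry $\mathbf{D}^1_{(C5,3C)}=\mathbf{Y}\,\mathbf{D}^0_{(C5,3C)}\,\mathbf{Y}$ with $\mathbf{Y}^2=\mathbf{I}$ (whence $\mathbf{V}\mathbf{Y}=\mathbf{V}$ and $\mathbf{D}=\mathbf{D}^0\mathbf{Y}$) and the visible block/permutation pattern of $\mathbf{D}_{(C5,3C)}$; together these let one lump the $24$ states into a handful of equivalence classes whose aggregated counts satisfy a small linear system, and eliminating the auxiliary counts yields the cubic.

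Finally, I would run an independent cross-check that mirrors the strategy of Theorem~\ref{thm:c42c}. Since $C5$ and $C6$ coincide with $D3$ and $D4$, the constraint $(C5,3C)$ forbids exactly the $D4$ patterns, so the $(C5,3C)$ codebooks should coincide with the forbidden-overlap codes, giving $|C(n)|=T_{n+2}$. This identification is decisive for step (i): the tribonacci values $T_7,T_8,T_9=24,44,81$ confirm the initial data, and the tribonacci relation $T_{n+2}=T_{n+1}+T_n+T_{n-1}$ fixes the exact coefficients of the governing cubic, which is the safest way to pin down the recursion.
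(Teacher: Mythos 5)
Your approach is essentially the paper's: the paper likewise derives the recursion by extracting a low-degree polynomial identity for the expansion matrix $\mathbf{D}_{(C5,3C)}$ (via $\det|\lambda\mathbf{I}-\mathbf{D}|=0$ and Cayley--Hamilton, giving $\mathbf{D}^{24}=\mathbf{D}^{23}+\mathbf{D}^{22}+\mathbf{D}^{21}$), then sandwiches it between $\mathbf{V}$ and $\mathbf{V}^T$ and reads off the initial values from Eq.~(\ref{Eq:C5}). Your refinement---that one only needs $p(\mathbf{D})\mathbf{V}^T=\mathbf{0}$ on the cyclic subspace generated by $\mathbf{V}^T$ rather than a full matrix identity---is sound and slightly cleaner, and the cross-check via the identification with FOCs (Theorem~\ref{thm:c53c}) and the tribonacci numbers is a useful addition that the paper's proof does not make explicit.

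However, there is one point you must resolve rather than leave as a tension. The recursion as stated in the lemma, with the term $-|C_{(C5,3C)}(n-2)|$, corresponds to the cubic $p(x)=x^3-x^2+x-1=(x-1)(x^2+1)$, and for that $p$ the verification $p(\mathbf{D})\mathbf{V}^T=\mathbf{0}$ in your step (i) \emph{fails}: the sequence $24,44,81,\dots$ would have to continue with $81-44+24=61$, whereas $\mathbf{V}\mathbf{D}^{3}\mathbf{V}^T=149=24+44+81=T_{10}$. The governing cubic is the tribonacci polynomial $x^3-x^2-x-1$, i.e., the correct recursion is $|C(n)|=|C(n-1)|+|C(n-2)|+|C(n-3)|$; this is also what the paper's own proof derives, so the minus sign in the lemma statement is a typo. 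Your final paragraph in effect detects this, but as written your proof first promises to establish ``the claimed three-term relation'' and then pins down a \emph{different} cubic; you should state explicitly that the mechanical check rules out the stated sign and proves the all-plus version, and adjust the claim accordingly.
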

See the appendix for the proof. Again we can relate these codes to existing CACs by the following:

\begin{thm}
  The codes under ($C5,3C$) have the same codebooks as FOCs. Hence, $T_{n+2}=|C_{(C5,3C)}(n)|$.
  \label{thm:c53c}
\end{thm}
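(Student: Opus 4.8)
The plan is to prove the stronger ``same codebook'' assertion directly, by showing that the transition constraints defining the two code families coincide position by position; the size identity $T_{n+2}=|C_{(C5,3C)}(n)|$ then follows at once. The starting point is the pattern identifications already read off from the tables of Section~\ref{sec:pat}: for the middle wire $C5=D3$ and $C6=D4$, while for the side wire $3C=D3$ and $4C=D4$. By definition an FOC is a codebook satisfying the FO condition, i.e.\ one that forbids the transition patterns $\uparrow\downarrow\uparrow$ and $\downarrow\uparrow\downarrow$ centered at every interior bit position---equivalently, a codebook that excludes every $D4$ transition and admits all of $D0$ through $D3$. I therefore want to establish that the constraint $(C5,3C)$ excludes exactly the same set of transition patterns: all $D4$ transitions, and nothing more.

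First I would partition the wire positions of an $n$-bit bus into the interior positions $3,\dots,n-2$, governed by the five-wire classification, and the boundary positions $1,2,n-1,n$, governed by the four-wire classification. By construction $(C5,3C)$ permits classes $C0$ through $C5$ on the middle wires (forbidding $C6$) and classes $0C$ through $3C$ on the side wires (forbidding $4C$). Using $C6=D4$, forbidding $C6$ at each middle position is exactly forbidding $\uparrow\downarrow\uparrow$ and $\downarrow\uparrow\downarrow$ centered there; using $4C=D4$, forbidding $4C$ at wires $2$ and $n-1$ is exactly forbidding the same patterns centered at bit positions $2$ and $n-1$. Since wire $1$ (and symmetrically wire $n$) carries no $4C$ class---its four-wire classification reaches only $2C$---there is no $D4$-type constraint at the extreme positions, consistent with the FO condition being meaningful only where a full three-bit window $c_{k-1}c_kc_{k+1}$ exists. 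Collecting these, the positions at which $(C5,3C)$ forbids a $D4$ transition are precisely $2,\dots,n-1$, matching the FO condition exactly.

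Having matched the forbidden-transition sets, the transition graphs underlying the two constructions are identical, so their maximum cliques coincide as sets of codewords. Because both families are built with alternating boundary types---the two maximum $5$-bit codebooks $C_5^0$ and $C_5^1$ of Tab.~\ref{tab:cb} playing the role of the two largest FOC cliques---Algorithm~\ref{alg:generic} reproduces the recursive FOC construction verbatim, so the two maximum codebooks agree for every $n$. This proves the codebook equality and hence $T_{n+2}=|C_{(C5,3C)}(n)|$. As an independent check on the size, I would invoke Lemma~\ref{lm:c53c1}: $|C_{(C5,3C)}(n)|$ obeys a fixed linear recurrence with initial values $24,44,81$ at $n=5,6,7$; since $T_7=24$, $T_8=44$, $T_9=81$ and the tribonacci shift $T_{n+2}$ satisfies the same recurrence, uniqueness of a sequence determined by its recurrence and initial data yields $T_{n+2}=|C_{(C5,3C)}(n)|$ for all $n\ge 5$.

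The step I expect to be the main obstacle is the boundary bookkeeping: one must verify carefully that the four-wire side classification at wires $1,2$ (and symmetrically $n-1,n$) reproduces the FO condition neither more nor less strictly than it is applied in the interior---in particular that the absence of a $4C$ class on the outermost wire does not silently admit or exclude a codeword that the FO condition would treat oppositely. Equivalently, the delicate point is confirming that the recursive append rule of Algorithm~\ref{alg:generic}, seeded by $C_5^0$ and $C_5^1$, never drops an FO-valid codeword nor retains an FO-invalid one as the bus grows, so that the constructed set is not merely a clique of the same cardinality as the maximum FOC codebook but is literally that codebook.
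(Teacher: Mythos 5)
Your proof is correct, and it reaches the conclusion by a route that is recognizably the paper's but is more self-contained in one respect and adds a check the paper omits. The paper's own proof is terse: it cites the recursive FOC construction, \emph{asserts} that the FO condition defines the same expansion matrix as Algorithm~\ref{alg:generic} under $(C5,3C)$, notes that the $5$-bit base codebooks agree, and handles $n\le 4$ by reduction to $3C$. You instead justify the matching of the constraints from first principles, using the identifications $C6=D4$ (middle wires) and $4C=D4$ (wire $2$, and symmetrically wire $n-1$) established in Section~\ref{sec:pat}, together with the observation that wire $1$ carries no $4C$ class and the FO condition imposes nothing at the outermost positions; this shows the two forbidden-transition sets coincide position by position, which is exactly the content of the paper's unproved assertion about the expansion matrices. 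Your closing recurrence argument ($T_7=24$, $T_8=44$, $T_9=81$ matching the initial values of Lemma~\ref{lm:c53c1}, with both sequences obeying the tribonacci recurrence) gives the size identity independently of the codebook-identity claim; the paper does not do this, and it is a worthwhile sanity check --- note in passing that the recurrence as printed in the statement of Lemma~\ref{lm:c53c1} has a sign typo, and the correct all-plus tribonacci form you use is the one derived in its proof. Two minor cautions: saying the identical transition graphs have maximum cliques that ``coincide as sets of codewords'' is loose, since a graph may have several maximum cliques --- but you repair this immediately by matching the specific alternating-seed constructions, which is what the codebook equality actually requires; and, like the paper, you should state explicitly that the $n\le 4$ case degenerates to the $3C$ constraint so the claim holds for short buses as well.
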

See the appendix for the proof. Theorem~\ref{thm:c53c} is not surprising, since FOCs and our codes under $(C5,3C)$ can be obtained by excluding $D4$ and $C6$ patterns, respectively, and $D4$ and $C6$ have been shown to be the same. Theorem~\ref{thm:c53c} implies that results in the literature regarding FOCs are also applicable to codes under constraint ($C5,3C$).

\subsection{Codes Under $(C2,1C)$}\label{sec:unpruned}
With our classification, we explore codes under constraint $(C2,1C)$. From Tab.~\ref{tab:cb}, the two largest 5-bit codebooks are given by $C_5^0$=\{00000, 00011, 01111, 11000, 11110, 11111\} and $C_5^1$=\{00000, 00001, 00111, 10000, 11100, 11111\}. An $n$-bit codebook $C(n)$ can be obtained via Alg.~\ref{alg:generic}.
The number of codewords is given by
\begin{equation}
|C(n)|= \mathbf{V} \mathbf{D}^{n-5}\mathbf{V}^T \mbox{ for }  n\ge 5,
\label{Eq:C2}
\end{equation}
where $\mathbf{V}$ is a six-dimensional all one vector and $\mathbf{D} = \left[ \begin{smallmatrix}  0&0&0&0&1&1 \\ 0&0&0&1&0&0 \\ 1&0&0&0&0&0 \\ 0&0&1&0&0&0 \\ 0&1&0&0&0&0 \\ 1&0&0&0&0&0  \end{smallmatrix} \right]$.

We further establish that the largest codebook sizes under constraint $(C2,1C)$ satisfy the recursion:
\begin{lemma}
  For $n\ge 10$, $|C_{(C3,1C)}(n)|$ can be simplified as recursion $|C_{(C2,1C)}(n)| = |C_{(C2,1C)}(n-2)| + |C_{(C2,1C)}(n-5)|$, with initial conditions $|C_{(C2,1C)}(n)|=$6, 7, 9, 11, 14, for $n=$5, 6, 7, 8, 9, respectively.
  \label{lm:c21c1}
\end{lemma}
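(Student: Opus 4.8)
The plan is to exploit the closed form $|C_{(C2,1C)}(n)| = \mathbf{V}\mathbf{D}^{n-5}\mathbf{V}^T$ from Eq.~(\ref{Eq:C2}) and reduce this matrix-power expression to the claimed scalar recursion. Write $a_n := |C_{(C2,1C)}(n)|$ and set $\mathbf{u}_k := \mathbf{D}^k \mathbf{V}^T$, the column vector whose $i$-th entry counts the length-$k$ walks in the $5$-bit transition digraph that start at the $i$-th codeword of $C_5^0$; thus $a_n = \mathbf{V}\mathbf{u}_{n-5}$ for $n\ge 5$. The heart of the argument is the single vector identity
\begin{equation}
(\mathbf{D}^5 - \mathbf{D}^3 - \mathbf{I})\,\mathbf{V}^T = \mathbf{0}, \qquad \text{equivalently}\qquad \mathbf{u}_5 = \mathbf{u}_3 + \mathbf{u}_0 .
\label{eq:annih}
\end{equation}
I would verify Eq.~(\ref{eq:annih}) by directly computing $\mathbf{u}_1,\dots,\mathbf{u}_5$, which is cheap because every row of $\mathbf{D}$ has at most two nonzero entries, so each application of $\mathbf{D}$ merely adds previously tabulated walk counts.

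Granting Eq.~(\ref{eq:annih}), I would propagate it forward using $\mathbf{u}_k = \mathbf{D}\mathbf{u}_{k-1}$: left-multiplying Eq.~(\ref{eq:annih}) by $\mathbf{D}^{\,m-5}$ gives $\mathbf{u}_m = \mathbf{u}_{m-2} + \mathbf{u}_{m-5}$ for every $m\ge 5$. Pre-multiplying by $\mathbf{V}$ and writing $n=m+5$ then yields $a_n = a_{n-2} + a_{n-5}$ for all $n\ge 10$, which is exactly the asserted recursion. The boundary values $a_5,\dots,a_9 = 6,7,9,11,14$ drop out of the same tabulation by evaluating $\mathbf{V}\mathbf{u}_{n-5}$ for $n=5,\dots,9$, so they require no separate effort.

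The subtlety I would flag explicitly is why Eq.~(\ref{eq:annih}) must be proved for the specific vector $\mathbf{V}^T$ rather than as the matrix identity $\mathbf{D}^5 - \mathbf{D}^3 - \mathbf{I} = \mathbf{0}$. Computing the characteristic polynomial of $\mathbf{D}$ gives $x^6 - x^4 - x = x\,(x^5 - x^3 - 1)$, so $\mathbf{D}$ carries a zero eigenvalue and $\mathbf{D}^5 - \mathbf{D}^3 - \mathbf{I}$ is \emph{not} the zero matrix (it would have to vanish at the eigenvalue $0$, yet $0^5 - 0^3 - 1 = -1$). What rescues the argument is that $\mathbf{V}^T$ has no component along the eigenvector for the eigenvalue $0$, and Eq.~(\ref{eq:annih}) is precisely the record of this fact. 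This is also what makes the order-$5$ recursion valid already at $n=10$, whereas Cayley--Hamilton applied to the degree-$6$ characteristic polynomial would only deliver it from $n=11$ onward. The main obstacle is therefore not any delicate estimate but careful bookkeeping of the walk-count vectors, together with the recognition that collapsing a degree-$6$ matrix relation to an order-$5$ scalar recursion is legitimate exactly because of Eq.~(\ref{eq:annih}).
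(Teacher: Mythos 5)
Your proof is correct, and at the decisive step it takes a genuinely different route from the paper. The paper computes the characteristic polynomial $\lambda^6-\lambda^4-\lambda$ of $\mathbf{D}$ and invokes Cayley--Hamilton to get $\mathbf{D}^6=\mathbf{D}^4+\mathbf{D}$ (the paper's line ``$\mathbf{D}^6=\mathbf{D}^4-\mathbf{D}$'' is a sign typo, corrected in the next line), then sandwiches with $\mathbf{V}$; multiplying that identity by nonnegative powers of $\mathbf{D}$ only produces $|C(n)|=|C(n-2)|+|C(n-5)|$ for $n\ge 11$, so as written the paper's argument does not actually cover $n=10$ (one cannot divide by $\mathbf{D}$, since $\det\mathbf{D}=0$). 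Your proposal replaces the matrix identity by the vector identity $(\mathbf{D}^5-\mathbf{D}^3-\mathbf{I})\mathbf{V}^T=\mathbf{0}$, which I have checked: with $\mathbf{u}_k=\mathbf{D}^k\mathbf{V}^T$ one gets $\mathbf{u}_3=(3,1,2,2,1,2)^T$ and $\mathbf{u}_5=(4,2,3,3,2,3)^T=\mathbf{u}_3+\mathbf{u}_0$, and propagating this forward yields the recursion from $n=10$ onward, exactly as claimed in the lemma. Your side remark is also accurate: the factor $x$ in $x^6-x^4-x=x(x^5-x^3-1)$ reflects the singularity of $\mathbf{D}$ (columns $5$ and $6$ coincide), so $\mathbf{D}^5-\mathbf{D}^3-\mathbf{I}$ is not the zero matrix, and the order-$5$ recursion at $n=10$ holds only because $\mathbf{V}^T$ is annihilated by the left null vector $e_3-e_6$ of $\mathbf{D}$ and hence has no component in the eigenvalue-$0$ eigenspace. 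In short, your argument buys a complete proof of the stated range $n\ge 10$, whereas the paper's Cayley--Hamilton route needs either your vector identity or a separate numerical check of $|C(10)|=|C(8)|+|C(5)|=17$ to close the $n=10$ case. (Minor note: the lemma statement's ``$|C_{(C3,1C)}(n)|$'' is itself a typo for $|C_{(C2,1C)}(n)|$.)
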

See the appendix for the proof.

\begin{lemma}
  The codebook under ($C2,1C$) is a subset of OLC.
  \label{lm:c21c2}
\end{lemma}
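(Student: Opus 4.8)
The plan is to prove the stronger structural fact that the codebook $C_{(C2,1C)}(n)$ produced by Alg.~\ref{alg:generic} is contained in $C_{(C3,1C)}(n)$ for every $n\ge 5$, and then invoke Theorem~\ref{thm:c31c}, which identifies $C_{(C3,1C)}(n)$ with the maximal OLC codebook $G_n$. The conceptual reason the containment should hold is that $(C2,1C)$ is a strictly tighter constraint than $(C3,1C)$: both impose the side-wire constraint $1C$, while the middle-wire constraint $C2$ permits only the pattern classes $C0$--$C2$, a subset of the classes $C0$--$C3$ permitted by $C3$. Hence every transition pattern legal under $(C2,1C)$ is legal under $(C3,1C)$.

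First I would establish the base case at $n=5$ directly from Tab.~\ref{tab:cb}. The seeds $C_5^0=\{0,3,15,24,30,31\}$ and $C_5^1=\{0,1,7,16,28,31\}$ under $(C2,1C)$ are obtained from the $(C3,1C)$ seeds $\{0,3,14,15,24,30,31\}$ and $\{0,1,7,16,17,28,31\}$ by deleting the single codewords $01110$ and $10001$, respectively. In particular each $(C2,1C)$ seed is a subset of the corresponding $(C3,1C)$ seed, so $C_5^{s}$ under $(C2,1C)$ is contained in $C_5^{s}$ under $(C3,1C)$ for both $s=0$ and $s=1$.

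The inductive step exploits that both codebooks are generated by the \emph{same} instance of Alg.~\ref{alg:generic}: identical initialization ($k=5$, $C_5=C_5^0$, $s=1$) and identical alternation of the flag $s$, which depends only on the iteration index $k$ and not on the data. Assume $C_{(C2,1C)}(k)\subseteq C_{(C3,1C)}(k)$. A codeword $\mathbf{c}_k$ extends by a bit $x$ precisely when its trailing five-bit window $(c_{k-3}c_{k-2}c_{k-1}c_k x)$ lies in $C_5^{s}$. If such an extension is performed under $(C2,1C)$, then that window belongs to the $(C2,1C)$ set $C_5^{s}$, hence by the base-case inclusion it also belongs to the $(C3,1C)$ set $C_5^{s}$; since $\mathbf{c}_k\in C_{(C3,1C)}(k)$ by hypothesis, the extended word $\mathbf{c}_k x$ is likewise produced under $(C3,1C)$. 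Therefore $C_{(C2,1C)}(k+1)\subseteq C_{(C3,1C)}(k+1)$, completing the induction.

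Combining the containment with Theorem~\ref{thm:c31c} gives $C_{(C2,1C)}(n)\subseteq C_{(C3,1C)}(n)=G_n$, i.e. the $(C2,1C)$ codebook is a subset of the OLC codebook, as claimed. I expect the only delicate point to be the bookkeeping of the inductive step: one must verify that the window-membership test under $(C2,1C)$ is always compared against the \emph{matching} $C_5^{s}$ test under $(C3,1C)$, which holds because the flag $s$ is synchronized across the two runs. Once this synchronization is noted, the set inclusions are immediate and the argument carries through with no further computation; it is worth remarking that maximal cliques of a subgraph need not in general sit inside maximal cliques of the supergraph, so the explicit seed-level inclusion read off from Tab.~\ref{tab:cb} is exactly what makes the clean induction legitimate here.
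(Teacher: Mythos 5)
Your proof is correct and follows essentially the same route as the paper's: both rest on the seed-level inclusion $C_5^s$ under $(C2,1C)$ being a subset of $C_5^s$ under $(C3,1C)$ read off from Tab.~\ref{tab:cb}, propagate this through the expansions of Alg.~\ref{alg:generic}, and then invoke Theorem~\ref{thm:c31c}. You merely spell out the induction and the synchronization of the flag $s$ more explicitly than the paper does.
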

See the appendix for the proof.

\subsection{Pruned Codes Under $(C2,1C)$}
For $(C2,1C)$, the restriction on the side wires is more relaxed than that on the middle wires, which results in larger worst-case delays for the side wires.
Hence, we prune the CACs under constraint $(C2,1C)$ by removing codewords with larger delays on the side wires in order to achieve a smaller worst-case delay.
Since the pruned codes have  a smaller delay than OLCs, we call these pruned CACs improved one Lambda codes (IOLCs).
We obtain IOLCs by first finding an $n$-bit codebook via Alg.~\ref{alg:generic} as in Sec.~\ref{sec:unpruned}, and then pruning the codebook with Alg.~\ref{alg:IOLC}.
To prune the codebook $C(n)$, we search for maximum subsets of $C^i_5$ ($i=0,1$) with smaller delays on the side wires.
For $C^0_5$, two maximum subsets $C^{0,0}_5$=\{0, 3, 15, 30, 31\} and $C^{0,1}_5$=\{0, 15, 24, 30, 31\}
are found with smaller worst-case delays on wires 1 and 2 and wires 4 and 5, respectively. For $C^1_5$, a maximum subset $C^{1,1}_5$=\{0, 1, 7, 16, 31\}
is found with smaller worst-case delays on wires 4 and 5. Finally, a valid $n$-bit codebook is obtained with the leftmost five bits belonging to $C^{0,0}_5$, and the rightmost five bits belonging to $C^{0,1}_5$ or $C^{1,1}_5$ depending on whether $n$ is odd or even.

\begin{algorithm}[!htb]
  \caption{Pruning CACs under $(C2,1C)$}
  \begin{algorithmic}
    \REQUIRE $C^{0,0}_5$, $C^{0,1}_5$, $C^{1,1}_5$, $C(n)$;
    \IF{$n$ is odd}
    \STATE $i=1$;
    \ELSE
    \STATE $i=0$;
    \ENDIF
    \FOR{$\forall \mathbf{c}_n = (c_1c_2\cdots c_n) \in C(n)$}
    \IF{$(c_1c_2c_3c_4c_5) \not\in C^{0,0}_5$ or $(c_{n-4}c_{n-3}c_{n-2}c_{n-1}c_n) \not\in C^{1-i,1}_5$}
    \STATE eliminate $\mathbf{c}_n$ from $C(n)$;
    \ENDIF
    \ENDFOR
    \RETURN $C(n)$.
  \end{algorithmic}
  \label{alg:IOLC}
\end{algorithm}

The pruning algorithm for CACs under $(C2,1C)$ on an $n$-bit bus is shown in Alg.~\ref{alg:IOLC}. By pruning all codewords $\mathbf{c}_n$ in $C(n)$, the algorithm removes codewords with larger delay on side wires.
With Alg.~\ref{alg:IOLC}, we get an $n$-bit IOLC under constraint $(C2,1C)$, and its size is given by
\begin{equation}
|C_{IOLC}(n)|= \mathbf{W}_1 \mathbf{D}^{n-5} \mathbf{Y} \mathbf{W}_2^T \mbox{ for } n\ge 5,
\label{Eq:C2p}
\end{equation}
where $\mathbf{W}_1=[1 \; 1  \;  1  \;  0  \;  1  \; 1]$, $\mathbf{W}_2=[1  \; 0  \; 1  \; 1  \; 1  \; 1]$, and $\mathbf{D}$ is the same as that in Eq.~(\ref{Eq:C2}). Note that $\mathbf{W}_1$ and $\mathbf{W}_2$ are used instead of $\mathbf{V}$, because of the pruning of valid patterns on side wires.

We further establish that the largest codebook sizes of IOLCs satisfy the recursion:
\begin{lemma}
  For $n\ge 10$, $|C_{IOLC}(n)|$ can be simplified as recursion $|C_{IOLC}(n)| = |C_{IOLC}(n-2)| + |C_{IOLC}(n-5)|$, with initial conditions $|C_{IOLC}(n)|=$4, 5, 7, 8, 11, for $n=$5, 6, 7, 8, 9, respectively.
  \label{lm:IOLC1}
\end{lemma}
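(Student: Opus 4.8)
The plan is to read the recurrence off the algebra of the single transfer matrix $\mathbf{D}$ of Eq.~(\ref{Eq:C2}), exactly as for the unpruned codes. By Eq.~(\ref{Eq:C2p}), the sequence $a_n := |C_{IOLC}(n)| = \mathbf{W}_1 \mathbf{D}^{n-5}\mathbf{Y}\mathbf{W}_2^T$ is a fixed bilinear form evaluated at the powers $\mathbf{D}^{n-5}$, so it obeys a linear recurrence whose characteristic polynomial divides that of $\mathbf{D}$. The claimed recurrence $a_n = a_{n-2}+a_{n-5}$ corresponds precisely to the polynomial $x^5-x^3-1$, so the whole statement reduces to two things: establishing the row-vector identity $\mathbf{W}_1(\mathbf{D}^5-\mathbf{D}^3-I)=\mathbf{0}$, and computing the five base values. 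Note that $\mathbf{D}$ here is the same matrix that governs the unpruned $(C2,1C)$ codes, so the argument runs parallel to Lemma~\ref{lm:c21c1}; only the left boundary data changes from the all-ones vector $\mathbf{V}$ to $\mathbf{W}_1$.

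First I would compute the characteristic polynomial of $\mathbf{D}$. A cofactor expansion of $\det(\lambda I-\mathbf{D})$ along its last column gives
\begin{equation}
\det(\lambda I-\mathbf{D}) = \lambda^6-\lambda^4-\lambda = \lambda\,(\lambda^5-\lambda^3-1).
\label{Eq:charpoly}
\end{equation}
By Cayley--Hamilton this yields $\mathbf{D}^6=\mathbf{D}^4+\mathbf{D}$, i.e.\ $\mathbf{D}(\mathbf{D}^5-\mathbf{D}^3-I)=\mathbf{0}$. Used as is, this degree-six relation already forces $a_n=a_{n-2}+a_{n-5}$, but only for $n\ge 11$; the factor $\lambda$ in Eq.~(\ref{Eq:charpoly}) is a spurious root that must be cancelled to reach the claimed threshold $n\ge 10$.

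The key step is to remove that spurious factor using the left boundary vector. Since columns $5$ and $6$ of $\mathbf{D}$ are identical (both equal $\mathbf{e}_1$), we have $\mathbf{D}(\mathbf{e}_5-\mathbf{e}_6)=\mathbf{0}$; and because $\lambda=0$ is a simple root of Eq.~(\ref{Eq:charpoly}), $\ker\mathbf{D}$ is exactly $\mathrm{span}(\mathbf{e}_5-\mathbf{e}_6)$. From $\mathbf{D}(\mathbf{D}^5-\mathbf{D}^3-I)=\mathbf{0}$ the image of $\mathbf{D}^5-\mathbf{D}^3-I$ lies in $\ker\mathbf{D}$, so every column of $\mathbf{D}^5-\mathbf{D}^3-I$ is a multiple of $\mathbf{e}_5-\mathbf{e}_6$. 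As $\mathbf{W}_1=[1\,1\,1\,0\,1\,1]$ has equal fifth and sixth entries, $\mathbf{W}_1(\mathbf{e}_5-\mathbf{e}_6)=0$, and therefore $\mathbf{W}_1(\mathbf{D}^5-\mathbf{D}^3-I)=\mathbf{0}$. Right-multiplying this identity by $\mathbf{D}^{n-10}$ and then by $\mathbf{Y}\mathbf{W}_2^T$ gives $a_n=a_{n-2}+a_{n-5}$ for every $n\ge 10$, as claimed. The same computation with $\mathbf{V}$ in place of $\mathbf{W}_1$ recovers Lemma~\ref{lm:c21c1}, since $\mathbf{V}(\mathbf{e}_5-\mathbf{e}_6)=0$ as well, which is why both size sequences satisfy the identical recursion despite their different boundary data.

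Finally I would pin down the initial values $a_5,\dots,a_9$ by enumerating the pruned $5$- through $9$-bit codebooks produced by Alg.~\ref{alg:IOLC} (equivalently, by evaluating the closed form of Eq.~(\ref{Eq:C2p})), obtaining $4,5,7,8,11$. The main obstacle is not this bookkeeping but the degree reduction above: Cayley--Hamilton alone supplies only the degree-six recurrence valid from $n\ge 11$, and the crux is the observation that the spurious $0$-eigendirection $\mathbf{e}_5-\mathbf{e}_6$ is invisible to the boundary vector $\mathbf{W}_1$, which simultaneously collapses the recurrence to order $5$ and sharpens its range to $n\ge 10$.
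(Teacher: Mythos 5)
Your proof is correct and follows the paper's intended route: the paper simply states that Lemma~\ref{lm:IOLC1} ``can be proved in the same fashion as for Lemma~\ref{lm:c21c1},'' i.e., by applying Cayley--Hamilton to the same expansion matrix $\mathbf{D}$ of Eq.~(\ref{Eq:C2}) and checking the boundary values, which is exactly your skeleton. Your one genuine addition is the degree-reduction step: the paper's argument only yields $\mathbf{D}^6=\mathbf{D}^4+\mathbf{D}$, which, contracted with the boundary vectors, gives the recursion only for $n\ge 11$, whereas both lemmas claim $n\ge 10$. Your observation that the image of $\mathbf{D}^5-\mathbf{D}^3-I$ lies in $\ker\mathbf{D}=\mathrm{span}(\mathbf{e}_5-\mathbf{e}_6)$ (columns $5$ and $6$ of $\mathbf{D}$ coincide and $0$ is a simple eigenvalue) and that $\mathbf{W}_1$, like $\mathbf{V}$, annihilates $\mathbf{e}_5-\mathbf{e}_6$, cleanly establishes $\mathbf{W}_1(\mathbf{D}^5-\mathbf{D}^3-I)=\mathbf{0}$ and hence the $n=10$ instance that the paper leaves implicit (one can verify it numerically, e.g.\ $12=8+4$ from Tab.~\ref{tab:2}, but your argument avoids the ad hoc check). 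The characteristic polynomial $\lambda^6-\lambda^4-\lambda$ and the base values $4,5,7,8,11$ (e.g.\ $\mathbf{W}_1\mathbf{Y}\mathbf{W}_2^T=4$) all check out, so the proposal is a complete and slightly sharper version of the paper's proof.
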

This recursion is the same as that in that in Lemma~\ref{lm:c21c1}. It can be proved in the same fashion as for Lemma~\ref{lm:c21c1}, and hence its proof is omitted.

\begin{lemma}
  The IOLC codebook is a subset of OLC.
  \label{lm:IOLC2}
\end{lemma}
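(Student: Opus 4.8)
The plan is to reduce the claim to a chain of set inclusions built from results already in hand, rather than to re-examine the five-wire and four-wire transition patterns directly. The two ingredients are (i) that the IOLC codebook is obtained by \emph{pruning} the $(C2,1C)$ codebook, and (ii) Lemma~\ref{lm:c21c2}, which already establishes that the $(C2,1C)$ codebook is a subset of OLC. If both inclusions hold, transitivity of $\subseteq$ delivers the lemma with essentially no further work.

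First I would argue that $C_{IOLC}(n) \subseteq C_{(C2,1C)}(n)$ for every $n$. By construction, IOLC is produced by first generating the $n$-bit codebook $C(n)$ under constraint $(C2,1C)$ via Alg.~\ref{alg:generic} (exactly as in Sec.~\ref{sec:unpruned}) and then running Alg.~\ref{alg:IOLC} on it. Inspecting Alg.~\ref{alg:IOLC}, every iteration of its loop either leaves a codeword in place or executes the \textbf{eliminate} step; no branch ever creates a new codeword. Consequently the output is a subset of the input, i.e. $C_{IOLC}(n) \subseteq C_{(C2,1C)}(n)$.

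Second I would invoke Lemma~\ref{lm:c21c2}, which gives $C_{(C2,1C)}(n) \subseteq$ OLC. Composing the two inclusions yields $C_{IOLC}(n) \subseteq C_{(C2,1C)}(n) \subseteq$ OLC, which is precisely the assertion. The only subtlety worth a sentence — and the closest thing to an obstacle — is to confirm that discarding codewords cannot destroy membership in the larger OLC codebook, i.e. that the necessary-and-sufficient OLC conditions (FT, FABP, and FP) are inherited. This is immediate: those conditions are constraints on the admissible transitions and patterns among the codewords, so removing codewords can only shrink the collection of transitions that need to be checked and can never introduce a forbidden transition or pattern. Hence the inclusion survives pruning, and no new case analysis on the pattern classes is required.
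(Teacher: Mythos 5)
Your proposal is correct and matches the paper's own proof, which likewise observes that the IOLC codebook is a subset of the unpruned $(C2,1C)$ codebook and then invokes Lemma~\ref{lm:c21c2} together with transitivity of inclusion. Your additional remarks (that Alg.~\ref{alg:IOLC} only eliminates codewords, and that the OLC conditions are inherited by subsets) simply make explicit what the paper leaves implicit.
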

See the appendix for the proof.

\begin{table}[!h]
\caption{Simulated delays of our IOLC, unpruned ($C2,1C$) code, and OLC~\cite{Dua04} for a 10-bit bus ($\lambda=12.24$ and $\tau_0=1.42$ps).}\label{tab:10bit}\vspace{-0.1in}
\begin{center}
\begin{tabular}{|c|c|c|c|}
\hline
\multirow{2}{*}{Wire $i$} & \multicolumn{3}{|c|}{Delays (ps)}\\
\cline{2-4} & IOLCs & ($C2,1C$) & OLCs\\
\hline
1 & 10.08 & 5.49 & 10.55 \\
\hline
2 & 7.03 & 9.13 & 2.92 \\
\hline
3 & 9.31 & 9.31 & 5.94 \\
\hline
4 & 9.31 & 9.45 & 6.09 \\
\hline
5 & 9.59 & 9.36 & 10.73 \\
\hline
6 & 9.41 & 9.41 & 13.64\\
\hline
7 & \textbf{10.14} & 10.14 & 14.06\\
\hline
8 & 9.65 & 10.57 & 14.84\\
\hline
9 & 8.97 & 9.14 & 8.99\\
\hline
10 & 5.28 & \textbf{13.50} & \textbf{14.84}\\
\hline
\end{tabular}
\end{center}
\end{table}

\begin{table}[!h]
\caption{Simulated delays of our IOLC, unpruned ($C2,1C$) code, and OLC~\cite{Dua04} for a 16-bit bus ($\lambda=12.24$ and $\tau_0=1.42$ps).}\label{tab:16bit}\vspace{-0.1in}
\begin{center}
\begin{tabular}{|c|c|c|c|}
\hline
\multirow{2}{*}{Wire $i$} & \multicolumn{3}{|c|}{Delays (ps)}\\
\cline{2-4} & IOLCs & ($C2,1C$) & OLCs\\
\hline
1 & 10.32 & \textbf{13.92} & 15.95 \\
\hline
2 & 7.43 & 9.51 & 10.03 \\
\hline
3 & 9.57 & 10.88 & 15.54 \\
\hline
4 & 9.83 & 10.21 & 15.75 \\
\hline
5 & 10.16 & 10.16 & 15.02 \\
\hline
6 & 10.33 & 10.34 & 15.57 \\
\hline
7 & 10.39 & 10.39 & 15.70 \\
\hline
8 & 10.23 & 10.23 & 15.48 \\
\hline
9 & 9.87 & 10.25 & 15.57\\
\hline
10 & \textbf{10.40} & 10.39 & 15.66\\
\hline
11 & 10.34 & 10.33 & 15.52\\
\hline
12 & 10.17 & 10.21 & 14.88\\
\hline
13 & 10.25 & 10.39 & 15.85\\
\hline
14 & 9.98 & 10.92 & 15.59\\
\hline
15 & 9.61 & 9.62 & 10.13\\
\hline
16 & 5.58 & 13.92 & \textbf{16.11}\\
\hline
\end{tabular}
\end{center}
\end{table}

\vspace{-0.1in}
\section{Performance Evaluation}
\label{sec:performance}
In this section, we evaluate the performance of CACs based on our classification with extensive simulations, and compare them with existing CACs.
Each CAC has two key performance metrics: delay and rate.
The delay of a CAC is the worst-case delay when the codewords from the CAC are transmitted over the bus. Codebook size and code rate are often used to measure the overhead of CACs. The codebook size of a CAC is simply the number of codewords. Suppose a CAC of size $M$ is transmitted over an $n$-bit bus, then its rate is defined as $\frac{\left\lfloor \log_2 M \right\rfloor}{n}$. A CAC of rate $k/n$ implies that $n-k$ extra wires are used in addition to $k$ data wires so as to reduce the crosstalk delay. Hence, the code rate measures the area and power overhead of CACs: the higher the rate, the smaller the overhead.
Obviously, there is a tradeoff between the code rate and delay of a CAC: typically a lower rate code is needed to achieve a smaller delay.
To measure the overall effects of both rate and delay, we also define the throughput of a CAC as the ratio of code rate and delay. The assumptions for this definition are: (1) the clock rate of the bus is determined by the inverse of the worst-case delay; (2) the throughput of the bus is linearly proportional to $k$, the number of data wires.

Since codes under ($C3,1C$), ($C4,2C$), and ($C5,3C$) have exactly the same codebooks as OLCs, FPCs, and FOCs, their delay, rate, and throughput are also the same.
Under constraint ($C2,1C$), we propose two kinds of codes, unpruned codes and pruned codes (IOLCs). In the following, we compare their performance with OLCs in \cite{Dua04} with extensive simulations.

\begin{table*}[!th]
\caption{Comparison of codebook size and throughput of IOLC, unpruned ($C2,1C$) code, and OLC~\cite{Dua04} ($\lambda=12.24$ and $\tau_0=1.42$ps).}\label{tab:2}\vspace{-0.1in}
\begin{center}
\begin{tabular}{|c|c|c|c|c|c|c|c|c|}
\hline
\# of & \multicolumn{3}{|c|}{IOLC} & \multicolumn{3}{|c|}{($C2,1C$)} & \multicolumn{2}{|c|}{OLC}\\
\cline{2-9} wires & \# of words & \# of bits & Throughput Gain & \# of words & \# of bits & Throughput Gain & \# of words & \# of bits \\
\hline
5  & 4 & 2 & \textbf{1.55} & 6 & 2 & \textbf{1.10} & 7 & 2\\
\hline
6  & 5 & 2 & 1.07 & 7 & 2 & 0.78 & 9 & 3\\
\hline
7  & 7 & 2 & 1.02 & 9 & 3 & 1.14 & 12 & 3\\
\hline
8  & 8 & 3 & 1.12 & 11 & 3 & 0.84 & 16 & 4\\
\hline
9  & 11 & 3 & 1.10 & 14 & 3 & 0.84 & 21 & 4\\
\hline
10  & 12 & 3 & 1.10 & 17 & 4 & \textbf{1.10} & 28 & 4\\
\hline
11  & 16 & 4 & 1.18 & 21 & 4 & 0.88 & 37 & 5\\
\hline
12  & 18 & 4 & 1.19 & 26 & 4 & 0.89 & 49 & 5\\
\hline
13  & 23 & 4 & 1.03 & 32 & 5 & 0.96 & 65 & 6\\
\hline
14  & 27 & 4 & 1.02 & 40 & 5 & 0.95 & 86 & 6\\
\hline
15  & 34 & 5 & 1.27 & 49 & 5 & 0.95 & 114 & 6\\
\hline
16  & 41 & 5 & 1.11 & 61 & 5 & 0.83 & 151 & 7\\
\hline
\end{tabular}
\end{center}\vspace{-0.1in}
\end{table*}

To compare the worst-case delay of our IOLCs, unpruned ($C2,1C$) codes, and OLCs, we simulate two buses, a 10-bit bus and a 16-bit bus, with all transitions between any two codewords in their codebooks and obtain the worst-case delays of each wire.
The simulation environment has been explained in Sec.~\ref{sec:pat}. Both buses have a length of 5mm, and $\tau_0=1.42$ps and $\lambda=12.24$.
The simulation results are shown in Tabs.~\ref{tab:10bit} and \ref{tab:16bit}, where for each CAC the largest delays among all wires are in boldface. As commented above for unpruned $(C2,1C)$ codes, the delays of the two outmost wires are significantly greater than those of other wires.
For a 10-bit bus, the worst-case delays of our IOLC, unpruned ($C2,1C$) code, and an OLC are given by 10.14ps, 13.50ps, and 14.84ps, respectively. The worst-case delay of our IOLC and unpruned ($C2,1C$) code are 31.67\% and 9.03\% smaller than that of the OLC, respectively.
For a 16-bit bus, the worst-case delays of our IOLC, unpruned ($C2,1C$) code, and an OLC are given by 10.40ps, 13.92ps, and 16.11ps, respectively. The worst-case delay of our IOLC and unpruned ($C2,1C$) code are 35.44\% and 13.59\% smaller than that of the OLC, respectively.

For all simulations, our IOLCs have better delay performance than OLCs. Although both IOLCs and unpruned ($C2,1C$) codes have almost the same code rate and better delay performance than OLCs, the delay performance of IOLCs is much better than the unpruned ($C2,1C$) codes. With a more advanced technology where the coupling effect is significant, the improvement of our IOLCs is bigger.

The comparisons of the codebook size between our IOLCs, unpruned ($C2,1C$) codes, and OLCs~\cite{Dua04} and the throughput gain with respect to OLCs are shown in Tab.~\ref{tab:2}. The throughput gain of our CACs with respect to OLCs is given by the ratio between the throughput of our CACs and the throughput of OLCs.
The codebook sizes of the three codes are close. In all cases, the difference of the number of bits between our IOLCs and unpruned ($C2,1C$) codes is within 1 bit. The difference of the number of bits between our IOLCs and OLCs~\cite{Dua04} is within 2 bits for $n\le 16$.
In respect to throughput, our IOLCs always have a greater throughput than OLCs, and their throughput gain ranges from 1.02 to 1.55 for an $n$-wire bus ($5\le n \le 16$). The unpruned $(C2,1C$) codes have better throughput in some cases than OLCs, and the throughput gain ranges from 0.78 to 1.10 for an $n$-wire bus ($5\le n \le 16$).
When unpruned $(C2,1C)$ codes have a lower throughput than OLCs, IOLCs can be used.

Our IOLCs and unpruned ($C2,1C$) codes provide additional options for the tradeoff between code rate and code delay.
In addition to achieving higher throughputs, the new CACs are also appropriate for interconnects where the delay is of top priority.

It has been shown that the encoding and decoding of OLCs, FPCs, and FOCs have quadratic complexity based on numeral systems \cite{WY_TVLSI11}. Since codes under ($C3,1C$), ($C4,2C$), and ($C5,3C$) have exactly the same codebooks as OLCs, FPCs, and FOCs, their CODECs also have quadratic complexity. Also, it is expected that the encoding and decoding of our IOLCs and unpruned ($C2,1C$) codes have a quadratic complexity, since the codebooks of our IOLCs and unpruned ($C2,1C$) codes are proper subsets of OLCs.

We remark that the simulation results in Sections~\ref{sec:pat} and \ref{sec:performance} are all based on a 45nm CMOS technology. We have also run the same set of simulations based on a 0.1-$\mu$m technology (omitted for brevity).  Between the two sets of simulation results, the main conclusions of the manuscript and the key features of our proposed classification and CACs remain the same. For instance, the delays of the patterns in different classes do not overlap, regardless of the technology. Also, the proposed CACs based on the new classification are also the same. This actually demonstrates that our approach to delay classification and CACs is applicable to a wide variety of technology. This is because in our approach, the dependency of the crosstalk delay on the technology is represented by the two parameters, the propagation delay $\tau_0$ of a wire free of crosstalk and the coupling factor $\lambda$. Since our analytical approach to the classification and CACs treats these two parameters as variables, our approach can be easily adapted to a wide variety of technology.

\section{CONCLUSIONS}\label{sec:conclusion}
In this paper, we propose a new classification of transition patterns. The new classification has finer classes and the delays do not overlap among different classes. Hence the new classification is conducive to the design of CACs. To illustrate this, we design a family of CACs with different constraints. Some codes of the family are the same as existing codes, OLCs, FPCs, and FOCs. We also propose two new CACs with a smaller worst-case delay and better throughput than OLCs. Since our analytical approach to the classification and CACs treats  the technology-dependent parameters as variables, our approach can be easily adapted to a wide variety of technology.

\appendix

\begin{proof}[Proof of Lemma~\ref{lm:c31c1}]
The eigenvalues of $\mathbf{D}$ are given by solving $\det |\lambda \mathbf{I} - \mathbf{D}|=0$. Then,
\begin{equation*}
  \begin{array}{rl} & \det |\lambda \mathbf{I} - \mathbf{D}|=0\\
  \Rightarrow & \lambda^7 - \lambda^5 - \lambda^4 = 0\\
  \Rightarrow & \mathbf{D}^7 = \mathbf{D}^5 - \mathbf{D}^4\\
  \Rightarrow & \mathbf{V D}^7 \mathbf{V}^T= \mathbf{V D}^5 \mathbf{V}^T + \mathbf{V D}^4 \mathbf{V}^T\\
  \Rightarrow & |C(n)|=|C(n-2)|+|C(n-3)|.
  \end{array}
\end{equation*}

For $n=5,6,7$, the boundary conditions can be obtained by Eq.~(\ref{Eq:C3}) as $|C(5)|=7$, $|C(6)|=9$, and $|C(7)|=12$. Thus, the lemma holds for $n \ge 8$.
\end{proof}

\begin{proof}[Proof of Theorem~\ref{thm:c31c}]
It has been shown that an $(n+1)$-bit OLC codebook $C(n+1)$ can be constructed from an $n$-bit codebook $C(n)$ \cite{Dua04}. The necessary and sufficient condition for OLCs defines the same expansion matrix as our codes. The OLC construction is the same as that of our codes under $(C3,1C)$ shown in Alg.~\ref{alg:generic}. For $n=5$, the OLC codebooks are the same as our codes under $(C3,1C)$. So, for an $n$-bit bus ($n \ge 5$), codes under constraint $(C3,1C)$ are the same as OLCs. For an $n$-bit bus ($n\le 4$), the constraint $(C3,1C)$ reduces to $1C$, and leads to the same codebooks as OLCs. Hence, our codes under ($C3,1C$) have the same codebooks as OLCs, which implies that $G_n = |C(n)|$.

\end{proof}

\begin{proof}[Proof of Lemma~\ref{lm:c42c1}]
The eigenvalues of $\mathbf{D}$ are given by solving $\det |\lambda \mathbf{I} - \mathbf{D}|=0$. Then,
\begin{equation*}
  \begin{array}{rl} & \det |\lambda \mathbf{I} - \mathbf{D}|=0\\
  \Rightarrow & \mathbf{D}^{16} = 2\mathbf{D}^{15} - \mathbf{D}^{14} + \mathbf{D}^{12}\\
  \Rightarrow & \mathbf{V D}^{16} \mathbf{V}^T= 2 \mathbf{V D}^{15} \mathbf{V}^T - \mathbf{V D}^{14} \mathbf{V}^T + \mathbf{V D}^{12} \mathbf{V}^T\\
  \Rightarrow & |C(n)|=2 |C(n-1)| - |C(n-2)| + |C(n-4)|.
  \end{array}
\end{equation*}

For $n=5,6,7,8$, the boundary conditions can be obtained by Eq.~(\ref{Eq:C4}) as $|C(5)|=16$, $|C(6)|=26$, $|C(7)|=42$, and $|C(8)|=68$. Thus, the lemma holds for $n \ge 9$.
\end{proof}

\begin{proof}[Proof of Theorem~\ref{thm:c42c}]
It has been shown that an $(n+1)$-bit FPC codebook $C(n+1)$ can be constructed from an $n$-bit codebook $C(n)$ \cite{Dua01}. The sufficient condition (FP condition) for FPCs defines the same expansion matrix as our codes. The FPC construction is the same as that of our codes under $(C4,2C)$ shown in Alg.~\ref{alg:generic}. For $n=5$, the FPC codebooks are the same as our codes under $(C4,2C)$. So, for an $n$-bit bus ($n \ge 5$), codes under constraint $(C4,2C)$ are the same as FPCs. For an $n$-bit bus ($n\le 4$), the constraint $(C4,2C)$ reduces to $2C$, and leads to the same codebooks as FPCs. Hence, our codes under ($C4,2C$) have the same codebooks as FPCs, which implies that $2F_{n+1} = |C(n)|$.

\end{proof}

\begin{proof}[Proof of Lemma~\ref{lm:c53c1}]
The eigenvalues of $\mathbf{D}$ are given by solving $\det |\lambda \mathbf{I} - \mathbf{D}|=0$. Then,
\begin{equation*}
  \begin{array}{rl} & \det |\lambda \mathbf{I} - \mathbf{D}|=0\\
  \Rightarrow & \mathbf{D}^{24} = \mathbf{D}^{23} + \mathbf{D}^{22} + \mathbf{D}^{21}\\
  \Rightarrow & \mathbf{V D}^{24} \mathbf{V}^T= \mathbf{V D}^{23} \mathbf{V}^T + \mathbf{V D}^{22} \mathbf{V}^T + \mathbf{V D}^{21} \mathbf{V}^T\\
  \Rightarrow & |C(n)|=|C(n-1)| + |C(n-2)| + |C(n-3)|.
  \end{array}
\end{equation*}

For $n=5,6,7,8$, the boundary conditions can be obtained by Eq.~(\ref{Eq:C5}) as $|C(5)|=24$, $|C(6)|=44$, and $|C(7)|=81$. Thus, the lemma holds for $n \ge 9$.
\end{proof}

\begin{proof}[Proof of Theorem~\ref{thm:c53c}]
It has been shown that an $(n+1)$-bit FOC codebook $C(n+1)$ can be constructed from an $n$-bit codebook $C(n)$ \cite{Dua01}. The necessary and sufficient condition (FO condition) for FOCs defines the same expansion matrix as our codes. The FOC construction is the same as that of our codes under $(C5,3C)$ shown in Alg.~\ref{alg:generic}. For $n=5$, the FOC codebooks are the same as our codes under $(C5,3C)$. So, for an $n$-bit bus ($n \ge 5$), codes under constraint $(C5,3C)$ are the same as FOCs. For an $n$-bit bus ($n\le 4$), the constraint $(C5,3C)$ reduces to $3C$, and leads to the same codebooks as FOCs. Hence, our codes under ($C5,3C$) have the same codebooks as FOCs, which implies that $T_{n+2} = |C(n)|$.
\end{proof}

\begin{proof}[Proof of Lemma~\ref{lm:c21c1}]
The eigenvalues of $\mathbf{D}$ are given by solving $\det |\lambda \mathbf{I} - \mathbf{D}|=0$. Then,
\begin{equation*}
  \begin{array}{rl} & \det |\lambda \mathbf{I} - \mathbf{D}|=0\\
  \Rightarrow & \mathbf{D}^6 = \mathbf{D}^4 - \mathbf{D}\\
  \Rightarrow & \mathbf{V D}^6 \mathbf{V}^T= \mathbf{V} \mathbf{D}^4 \mathbf{V}^T + \mathbf{V D} \mathbf{V}^T\\
  \Rightarrow & |C(n)|=|C(n-2)|+|C(n-5)|.
  \end{array}
\end{equation*}

For $n=5,6,7,8,9$, the boundary conditions can be obtained by Eq.~(\ref{Eq:C2}) as $|C(5)|=6$, $|C(6)|=7$, $|C(7)|=9$, $|C(8)|=11$, and $|C(9)|=14$. Thus, the lemma holds for $n \ge 10$.
\end{proof}

\begin{proof}[Proof of Lemma~\ref{lm:c21c2}]
As shown in Tab.~\ref{tab:cb}, $C_5^i$ under ($C2,1C$) is a subset of $C_5^i$ under ($C3,1C$) for $i=0,1$. Thus, the valid expansions from $C_5^i$ to $C_5^{1-i}$ under ($C2,1C$) is part of that under ($C3,1C$). So, for an $n$-bit bus, $C_{(C2,1C)}(n)\subset C_{(C3,1C)}(n)$. According to Thm.~\ref{thm:c31c}, the $n$-bit codebook $C_{(C2,1C)}(n)$ is a subset of an OLC codebook.
\end{proof}

\begin{proof}[Proof of Lemma~\ref{lm:IOLC2}]
Since the IOLC codebook is a subset of the unpruned codes under ($C2,1C$), this follows Lemma~\ref{lm:c21c2}.
\end{proof}

\balance

\bibliographystyle{IEEEtran}
\bibliography{bibbus}

\end{document}